\documentclass[10pt,conference]{IEEEtran}
\usepackage{graphicx}
\usepackage{amsfonts, amsmath, amssymb, amsthm, mathtools}
\allowdisplaybreaks
\usepackage{enumerate}
\usepackage{tabularx}
\usepackage{quiver}
\usepackage{physics}
\usepackage{setspace}
\usepackage[bb=boondox]{mathalfa}
\usepackage{tikz}
\usepackage{algorithm}
\usepackage{algpseudocode}
\usepackage{color}
\usepackage[hidelinks]{hyperref}
\usepackage{listings}
\usepackage{makecell}
\usepackage{fancyvrb}
\usepackage{listings}

\allowdisplaybreaks

\theoremstyle{definition}
\newtheorem{theorem}{Theorem}[section]
\newtheorem{definition}[theorem]{Definition}
\newtheorem{lemma}[theorem]{Lemma}
\newtheorem{prop}[theorem]{Proposition}
\newtheorem{corollary}[theorem]{Corollary}
\newtheorem{remark}[theorem]{Remark}

\DeclareMathOperator{\proj}{proj}

\DeclareMathOperator{\ad}{ad}

\DeclareMathOperator*{\argmin}{argmin}

\title{Stabilizers for Compiling Logical Circuits under Hardware Constraints}

\author{
  \IEEEauthorblockN{Jack Weinberg}
  \IEEEauthorblockA{Electrical and Computer Engineering\\
  University of Arizona\\
  Email: jackweinberg@arizona.edu}
  \and
  \IEEEauthorblockN{Narayanan Rengaswamy}
  \IEEEauthorblockA{Electrical and Computer Engineering\\
  University of Arizona\\
  Email: narayananr@arizona.edu}
}

\begin{document}

\bstctlcite{IEEEexample:BSTcontrol}

\maketitle

\begin{abstract}
    To implement quantum algorithms on a quantum computer, we must overcome the twin problems of fault-tolerance --- how can we realize a relatively noiseless computation by cleverly combining noisy components? --- and compilation --- how can we realize an arbitrary quantum algorithm given the basic operations available on the quantum device at hand? 
    We show how treating the former problem via error-correcting codes enables greater flexibility in resolving the latter.
    Specifically, we explicitly leverage the fact that error-correcting codes introduce redundancy which renders physically distinct operators logically indistinguishable.
    In terms of computation, it suffices to implement any operator logically equivalent to some target, yet from a compilation perspective, certain choices may be preferable to others.
    Our novel contribution is making this intuition precise in the general setting of the special unitary group.
    In particular, we describe how to reduce the problem of making a compilation-ideal choice to a least squares problem and provide a closed form solution thereof.
    Using our framework, it is possible to circumvent inserting costly swaps to adhere to hardware connectivity; instead, we could realize the logical target through a distinct physical Hamiltonian that is natively accessible.
    We elucidate our approach using the $[[4,2,2]]$ code.
    We discuss connections to compressed sensing that may pave the way to efficient compilation leveraging physical degrees of freedom.
\end{abstract}

\begin{IEEEkeywords}
Stabilizers, least squares, special unitary group, Lie algebra, compilation
\end{IEEEkeywords}

\section{Introduction}
    Efficient quantum algorithms are thought to exist for solving problems that are out of reach for efficient probabilistic classical algorithms, i.e. it is conjectured that $\text{BPP} \subsetneq \text{BQP}$.
    For example, the quantum singular value transform (QSVT)~\cite{Gily_n_2019} encapsulates many quantum subroutines affording super-polynomial speedups over their best known classical counterparts, including quantum simulation~\cite{toyoizumi2023hamiltoniansimulationusingquantum} and methods for integer factorization and computing discrete logarithms~\cite{Shor_1997}. 
    Thus, there is substantial interest in physically realizing quantum computers in order to solve such problems which promise to have substantial impact in fields such as materials science \cite{wang2025quantumhybridmachinelearningmodels}, drug design \cite{smith2025bridgingquantumclassicalcomputing}, fundamental physics \cite{savage2025quantumsimulationsfundamentalphysics}, and cryptography \cite{mothukuri2026securingcryptographyagequantum}.
    However, current realizations are plagued by high intrinsic hardware error rates (see e.g. superconducting \cite{Google2024}, trapped ion \cite{Moses_2023}, or neutral atom \cite{reichardt2025faulttolerantquantumcomputationneutral} qubits).
    Thus, it appears that error correction, the technique of protecting quantum information via engineered redundancy, will be an essential component of practical quantum computation.
    If error correction is the theory of protecting quantum \emph{memory} from noise, then fault-tolerance is the theory of protecting quantum \emph{computation} from noise.
    Together, these theories elucidate how, in principle, one can combine faulty quantum components (with limited noise~\cite{Knill_1998,Knill_2000}) to realize \emph{encoded} quantum computation that is equivalent to but more reliable than its unencoded, or logical, counterpart.

    In addition to the challenges of error correction and fault-tolerance, anyone who aspires to build a quantum computer faces the challenge of \emph{compilation}: given readily accessible components (e.g., native gates via accessible Hamiltonians), how can one realize a target computation, either exactly or approximately?
    Approximate realization is the subject of \emph{unitary compilation} and is treated by the seminal work of Kitaev~\cite{kitaev_1997}, which was more recently refined by Kuperberg~\cite{kuperberg2023}.
    Exact realization is a subject of \emph{quantum optimal control} and is treated, for instance, by Chen \emph{et al.}~\cite{chen2022iterativegradientascentpulse} and by Lewis, Wiersema and Bose~\cite{lewis2025}.
    The typical limitation in accessibility of Hamiltonians is the underlying connectivity between qubits in the hardware, e.g., many superconducting qubit systems have only nearest neighbor connectivity or at most a few more longer range connections per qubit \cite{Linke2017}.
    The most common approach to address connectivity constraints is to insert swap gates that make the gates in the target circuit realizable. 
    However, this additional circuit depth is expensive and could potentially negate any quantum advantage provided by the logical algorithm.
    Hence, a cheaper alternative to swap-insertions is highly desirable.
    
    In the context of error-corrected quantum computation, \emph{it suffices to implement any operator whose action on the code space is identical to that of a logical target}; the action outside the code space is immaterial.
    Yet, from a compilation perspective, certain logically equivalent operators may be more readily implemented than others.
    In particular, certain choices may respect the connectivity constraints of the hardware at hand whereas others may not.
    In earlier work, Rengaswamy \emph{et al.}~\cite{Rengaswamy_2020} leveraged error correction to introduce the Logical Clifford Synthesis (LCS) algorithm, which efficiently and systematically explores all physical Clifford circuits that realize a target logical Clifford circuit, with the goal of minimizing circuit depth.
    Restricting to the Clifford group is advantageous since the Clifford group admits a symplectic representation of polynomial size in the number of physical qubits, as opposed to the unitary group which generally requires exponentially growing space.
    Though the algorithm itself only has quadratic complexity in the number of physical qubits, the solution space grows super-exponentially in the number of stabilizer generators for the code.
    This makes the search computationally challenging.
    Recently, Kuehnke \emph{et al.}~\cite{kuehnke2025hardwaretailoredlogicalcliffordcircuits} considered the same setting but with a focus on addressing limited hardware connectivity.
    The authors make use of the symplectic representation to reduce the problem of finding a connectivity-compliant physical realization of a logical operator to an \emph{integer quadratically constrained program} (IQCP).
    They demonstrate their method on several small codes with $\lessapprox 10$ physical qubits.
    Besides these, there are, to our knowledge, no other works that leverage physical degrees of freedom for compilation.
    
    Here, we work in greater generality, treating the connectivity-compliant logical operators problem in the setting of arbitrary special unitary operators with codes realized by any special unitary encoding map (recall that stabilizer codes are exactly those with Clifford encoding maps).
    Given that the special unitary group is obtained by restricting the global phases of the unitary group, which are physically irrelevant, we conclude that our method is essentially applicable to any unitary operator acting on the logical level of any error-correcting code.
    While the symplectic representation exploited by Kuehnke \emph{et al.} is not applicable in this more general setting, what becomes applicable, and indeed vital, is the Lie group structure of the special unitary group.
    
    Our primary contribution is to show that the above problem can be cast as an optimization problem over the space of \emph{stabilizer Hamiltonians} with a least squares objective function.
    Then, an immediate solution to this problem is the well-known Moore-Penrose pseudoinverse.
    We elucidate this approach via the $[[4,2,2]]$ code where a simple physical swap gate is known to realize the logical CNOT gate.
    If the swap is physically accessible on a hardware, then it is prudent to avoid a na{\"i}ve encoding of the CNOT and instead perform the swap.
    In our formulation, the pseudoinverse finds yet another alternative to the swap that shows the breadth of possibilities in a logically aware compilation framework, which is missing in conventional approaches.
    We think that ours is the first such framework that operates in full generality of the special unitary group, making it an important first step to navigate connectivity constraints in an error correction-aware setting.
    We also discuss the general computational burden of our approach and acknowledge that much more work must be done to make it amenable to real-time compilation.
    In this regard, we highlight a potential connection to compressed sensing by reformulating the problem as an $\ell_2$-$\ell_1$ minimization that encourages sparsity of the full physical Hamiltonian when expressed in the Pauli basis.
    We plan to delve deeper into this exciting direction in future work.

    The paper is organized as follows.
    Section \ref{sec:Technical-Background} discusses some technical background necessary to appreciate our results.
    Section \ref{sec:Theoretical-Results} describes our results.
    Section \ref{sec:Example} consists of a detailed example of our method in which we find a novel route for realizing logical $CNOT$ on the $[[4,2,2]]$ error-detecting code.
    We conclude with Section \ref{sec:Conclusion}.
    For the convenience of the reader, we provide pseudocode describing our least-squares strategy (see Algorithm \ref{alg:summary}).
    We assume access to a least-squares subroutine $\text{LS}$ which takes arguments $A$ and $b$ and returns $x$ minimizing $\|Ax-b\|_2$.
    Note that replacing $M$ with $J$ as in Definition \ref{def:Hamiltonian-cost} gives a method for minimizing the cost of a physical Hamiltonian with fixed logical action.
    
    \begin{algorithm}
    \hspace*{\algorithmicindent} \textbf{Input} Encoding Map $C : \mathbb{C}^{2^n} \to \mathbb{C}^{2^n}$ \\
    \hspace*{\algorithmicindent} \textbf{Input} Logical Hamiltonian $H_{\text{logical}} \in \mathfrak{su}(2^k)$ \\
    \hspace*{\algorithmicindent} \textbf{Input} Generators for accessible algebra\\ \hspace*{1.5cm} $g_1, \ldots, g_m \in \mathfrak{su}\left(2^n\right)$ \\
    \hspace*{\algorithmicindent} \textbf{Output} $H_{\text{corrected}}$
    \caption{Unitary Compilation}\label{alg:summary}
        \begin{algorithmic}[1]
            \Procedure{IM}{$C, U, g_1, ..., g_m$}
            \State $K \gets K$ as in Definition \ref{Def:K}
            \State $\mathcal{H} \gets \langle g_1, ..., g_m \rangle$
            \State $M \gets M$ as in Definition \ref{def:M}
            \State $A \gets A$ as in Definition \ref{def:inclusion}
            \State $H_{\text{Naïve}} \gets C K (H_{\text{logical}} \oplus 0_{2^n - 2^k}) K^\dagger C$
            \State $b \gets - M\left(H_{\text{Naive}} \right)$
            \State $H_{\text{Correction}} \gets C A \left( \text{LS}(M \circ A, b) \right) C^\dagger$
            \EndProcedure \\
            \Return $H_{\text{Correction}} + H_{\text{Naïve}}$
        \end{algorithmic}
    \end{algorithm}

\section{Technical Background}\label{sec:Technical-Background}
    \subsection{Lie Theory}\label{subsec:Lie-Theory}
    \subsubsection{Lie Groups}
    \begin{definition}
        A group $G$ equipped with a smooth structure is said to be a real Lie group if the two structures agree in the sense that group multiplication $G \times G \to G$ and inversion $G \to G$ are smooth~\cite[Chapter~7]{Lee2012}.
        Subsequently, all mentions of ``Lie groups" are to be understood as references to real Lie groups.
        That $G$ is a smooth manifold is to say, intuitively, that it locally admits a parametrization via real coordinates.
        That $G$ is a Lie group is to say that this parametrization admits a natural group structure.
        A homomorphism of Lie groups is a group homomorphism which is simultaneously smooth.
    \end{definition}

    \begin{definition}
        Define the special unitary group $SU(N)$ by
        \begin{multline}
            SU(N) := \{U \in GL(N, \mathbb{C}) : \\
            U U^\dagger = U^\dagger U = I, \det(U) = 1\}
        \end{multline}
        where $U^\dagger$ denotes the conjugate transpose of $U$ and the group operation is operator composition.
    \end{definition}

    \begin{prop}
        $SU(N)$ is a Lie group.
    \end{prop}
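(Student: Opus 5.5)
The plan is to show that $SU(N)$ is a closed subgroup of the Lie group $GL(N,\mathbb{C})$, and then invoke the closed subgroup theorem (Cartan's theorem, \cite[Theorem~20.12]{Lee2012}). That theorem says every closed subgroup of a Lie group is an embedded Lie subgroup. First we note that $GL(N,\mathbb{C})$ is itself a Lie group. It is an open subset of $\mathbb{C}^{N\times N}\cong\mathbb{R}^{2N^2}$, being the preimage of $\mathbb{C}\setminus\{0\}$ under the continuous map $\det$. Multiplication is polynomial in the real coordinates, and inversion is smooth by Cramer's rule, since it is a rational function with nonvanishing denominator $\det U$.

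Next we check that $SU(N)$ is a subgroup. The identity lies in it. If $U,V\in SU(N)$, then
\[
(UV)(UV)^\dagger = U V V^\dagger U^\dagger = I, \qquad \det(UV)=\det U\det V = 1 .
\]
Also $U^{-1}=U^\dagger$ satisfies the same conditions, using $\det(U^\dagger)=\overline{\det U}=1$.

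For closedness, write $SU(N)$ as the preimage of the closed set $\{(I,1)\}$ under the continuous map
\[
\Phi: GL(N,\mathbb{C})\to \mathbb{C}^{N\times N}\times\mathbb{C}, \qquad \Phi(U)=(UU^\dagger,\det U).
\]
The condition $U^\dagger U=I$ follows automatically from $UU^\dagger=I$ for square matrices. So $SU(N)$ is closed, and the closed subgroup theorem gives it a smooth structure. In this structure it is an embedded submanifold, and multiplication and inversion are smooth as restrictions of smooth maps to an embedded submanifold.

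The only nontrivial ingredient is the closed subgroup theorem itself, which we cite rather than reprove. If a self-contained argument were preferred, the main obstacle would be showing directly that $SU(N)$ is a regular level set. We would proceed in three steps.

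\begin{enumerate}
\item Consider $F(U)=(UU^\dagger - I,\ \det U - 1)$, taking values in $\mathfrak{h}\times\mathbb{C}$, where $\mathfrak{h}$ is the space of Hermitian matrices. The imaginary part of $\det U-1$ is the only part that can carry information, since on $U(N)$ we have $|\det U|=1$ already.
\item Restrict to $U(N)$ first. The differential $X\mapsto XU^\dagger+UX^\dagger$ of $UU^\dagger$ surjects onto $\mathfrak{h}$: take $X=\tfrac12 HU$ to hit a given Hermitian $H$.
\item On $U(N)$, the determinant has differential $X\mapsto \det U\,\mathrm{tr}(U^{-1}X)$. Choosing $X=iU$, which is tangent to $U(N)$, gives a nonzero purely imaginary value, so the map to the circle is a submersion.
\end{enumerate}

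This shows $SU(N)$ is the regular fiber of a submersion $U(N)\to S^1$. The dimension count $N^2-1$ then serves as a consistency check.
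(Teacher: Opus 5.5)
Your argument is correct. The paper gives no proof of this proposition. It is recorded as standard background, with the reader implicitly referred to Chapter~7 of Lee's text, which is cited in the preceding definition. Your first argument (subgroup check, closedness as a preimage under a continuous map, then the closed subgroup theorem) is the standard justification that the paper leaves unstated.

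Of your two routes, the first is shortest but rests on the closed subgroup theorem. That theorem's proof goes through the exponential map and is much deeper than the statement itself. The second route needs only the regular value theorem and Jacobi's formula, and it gives something this paper actually uses. At $U=I$ it identifies $T_I SU(N)=\{X : X+X^\dagger=0,\ \mathrm{tr}\,X=0\}$, which is precisely the paper's $\mathfrak{su}(N)$. So the same computation also yields $\dim SU(N)=N^2-1$ and the substance of the later (likewise unproved) proposition that $\mathfrak{su}(N)$ is the Lie algebra of $SU(N)$.

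Two cosmetic remarks on the self-contained version. First, the map $F$ in your first step could never be a submersion along $SU(N)$: $\dim_{\mathbb{R}}(\mathfrak{h}\times\mathbb{C})=N^2+2$ exceeds the codimension $N^2+1$ of $SU(N)$ in $GL(N,\mathbb{C})$. This is exactly why the two-stage passage through $U(N)$ is needed, so it is cleaner to omit $F$. Second, $iN\det U$ is purely imaginary only where $\det U=\pm1$. On the fiber $\det U=1$, which is all the regular value theorem uses, it spans $T_1S^1=i\mathbb{R}$, so your conclusion stands.
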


    \subsubsection{Lie Algebras}
    \begin{definition}
        A Lie algebra $\mathfrak{g}$ is a vector space over a field $\mathbb{F}$ equipped with a map $[\_, \_]: \mathfrak{g} \times \mathfrak{g} \to \mathfrak{g}$ satisfying the following for all $x,y,z \in \mathfrak{g}$ and $a,b \in \mathbb{F}$.
        \begin{enumerate}
            \item Bilinearity: 
            \begin{equation}
                [ax + by, z] = a[x,z] + b[y,z]
            \end{equation}
            \begin{equation}
                [z, ax + by] = a[z,x] + b[z,y]
            \end{equation}

            \item Alternativity:
            \begin{equation}
                [x,x] = 0
            \end{equation}
            
            \item The Jacobi identity:
            \begin{equation}
                [x, [y, z]] + [y, [z,x]] + [z, [x,y]] = 0
            \end{equation}
        \end{enumerate}
        
        We call $[\_, \_]$ a Lie bracket~\cite[Chapter~1]{Humphreys1975}.
        A Lie algebra homomorphism $T$ is a linear map respecting the Lie brackets of the domain and codomain in the sense that
        \begin{equation}
            T([x,y]) = [T(x), T(y)].
        \end{equation}
    \end{definition}

    \begin{definition}
        Define the special unitary lie algebra $\mathfrak{su}(N)$ by
        \begin{equation}
            \mathfrak{su}(N) := \{x \in M(N, \mathbb{C}) : \tr(x) = 0, x = - x^\dagger \},
        \end{equation}
        with the usual vector space operations over $\mathbb{R}$ (though not $\mathbb{C}$) and a Lie bracket given by the matrix commutator; note that $M(N, \mathbb{C})$ denotes the set of $N \times N$ complex matrices.
    \end{definition}

    \begin{prop}
        $\mathfrak{su}(N)$ is a Lie algebra.
    \end{prop}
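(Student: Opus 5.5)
We have to show three things: that $\mathfrak{su}(N)$ is a real vector space, that the commutator maps $\mathfrak{su}(N)\times\mathfrak{su}(N)$ into $\mathfrak{su}(N)$, and that the commutator satisfies the three axioms. The plan is to realize $\mathfrak{su}(N)$ as a real subspace of the associative algebra $M(N,\mathbb{C})$. The axioms will then follow from general facts about commutators in any associative algebra, so the only $\mathfrak{su}(N)$-specific work is closure under the bracket.

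\textbf{Subspace.} The conditions $\tr(x)=0$ and $x^\dagger=-x$ are both preserved by real linear combinations. Trace is complex-linear. The adjoint is conjugate-linear, so for $a,b\in\mathbb{R}$ we get $(ax+by)^\dagger=ax^\dagger+by^\dagger=-(ax+by)$. Note that this fails for non-real scalars: $(ix)^\dagger=ix$. This is why $\mathfrak{su}(N)$ is only a real vector space, as the definition states. The zero matrix lies in the set, so $\mathfrak{su}(N)$ is a real subspace of $M(N,\mathbb{C})$ viewed as a real vector space.

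\textbf{Closure under the bracket.} Take $x,y\in\mathfrak{su}(N)$.
\begin{itemize}
\item Trace: $\tr([x,y])=\tr(xy)-\tr(yx)=0$ by cyclicity of trace. This holds with no hypothesis on $x,y$.
\item Skew-Hermitian: $[x,y]^\dagger=(xy-yx)^\dagger=y^\dagger x^\dagger-x^\dagger y^\dagger=yx-xy=-[x,y]$, using $x^\dagger=-x$ and $y^\dagger=-y$, so the two sign flips cancel.
\end{itemize}
Hence $[x,y]\in\mathfrak{su}(N)$. This is the only step that genuinely uses the defining conditions, and I expect it to be the main point, though it is short.

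\textbf{Axioms.} Bilinearity follows from the distributivity of matrix multiplication over addition and its compatibility with real scalars, applied to each of $xy$ and $yx$. Alternativity is immediate: $[x,x]=xx-xx=0$. For the Jacobi identity, expand
\[
[x,[y,z]]=xyz-xzy-yzx+zyx,
\]
and similarly for its two cyclic permutations. Summing the three expansions gives twelve monomials, and each appears once with each sign, so the sum is $0$. This uses only the associativity of matrix multiplication. Together these establish that $\mathfrak{su}(N)$, with the matrix commutator, is a real Lie algebra.
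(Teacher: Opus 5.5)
Your proof is correct and complete. The real-subspace check is sound, including your observation that $ix$ is Hermitian, which is why only real scalars are allowed. Closure of the commutator follows from cyclicity of the trace together with $(xy)^\dagger = y^\dagger x^\dagger$. Bilinearity, alternativity, and the Jacobi identity are inherited from the associative algebra $M(N,\mathbb{C})$.

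The paper gives no proof of this proposition; it is quoted as standard background. So there is no argument of the paper's to compare against, and your direct verification is the standard one.

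The only other route the paper's setup suggests is through its later proposition that $\mathfrak{su}(N)$ is the Lie algebra of $SU(N)$. There, closure under the bracket comes for free by differentiating $t \mapsto e^{tX} Y e^{-tX}$ at $t=0$. That route needs more machinery: it requires first knowing that the Lie algebra of a matrix Lie group is a real vector space. It also requires identifying that Lie algebra with the traceless skew-Hermitian matrices via $\det e^{tX} = e^{t\,\mathrm{tr}\,X}$. Your argument avoids all of this and is self-contained.
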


    \begin{definition} We define the \emph{adjoint action} of a Lie algebra element $x \in \mathfrak{g}$ on the Lie algebra $\mathfrak{g}$ by
        \begin{equation}
        \begin{aligned}
            \ad_x \colon \mathfrak{g} & \longrightarrow \mathfrak{g} \\
            y & \longmapsto [x,y].
        \end{aligned}
        \end{equation}
    \end{definition}

    \begin{definition}
        We define the \emph{Killing form} on a Lie algebra over $\mathbb{F}$ by
        \begin{equation}
        \begin{aligned}
            \langle \_, \_ \rangle \colon \mathfrak{g} \times \mathfrak{g} & \longrightarrow \mathbb{F} \\
            (x,y) & \longmapsto \tr(\ad_x \circ \ad_y).
        \end{aligned}
        \end{equation}
    \end{definition}

    \begin{prop}
        On $\mathfrak{su}(N)$, the Killing form $\langle\_,\_\rangle_K$ satisfies
        \begin{equation}
            \langle x,y \rangle_K \propto \tr(xy).
        \end{equation}
        It is a negative-definite inner product~\cite[Page~583]{ASENS_1985_4_18_4_563_0}.
    \end{prop}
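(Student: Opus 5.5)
The plan is to compute $\tr(\ad_x \circ \ad_y)$ directly on the complexification and then restrict to the real form $\mathfrak{su}(N)$.

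\textbf{Step 1 (complexify).} I would first pass to $\mathfrak{gl}(N,\mathbb{C}) = M(N,\mathbb{C})$. For $x,y \in \mathfrak{su}(N)$, the operator $\ad_x \circ \ad_y$ on the real vector space $\mathfrak{su}(N)$ extends $\mathbb{C}$-linearly to $\mathfrak{sl}(N,\mathbb{C}) = \mathfrak{su}(N)\otimes_{\mathbb{R}}\mathbb{C}$, since every traceless matrix decomposes uniquely as $a + ib$ with $a,b$ anti-Hermitian and traceless. Any real basis of $\mathfrak{su}(N)$ is also a complex basis of $\mathfrak{sl}(N,\mathbb{C})$, so the trace is unchanged by this extension. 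Next I would view $\ad_x\circ\ad_y$ on all of $M(N,\mathbb{C}) = \mathfrak{sl}(N,\mathbb{C}) \oplus \mathbb{C} I$. The commutator kills $I$ and has image in the traceless matrices, so the added summand contributes nothing to the trace.

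\textbf{Step 2 (compute the trace).} Using the matrix units $E_{ij}$ as a basis of $M(N,\mathbb{C})$, we have $\ad_x\ad_y(z) = xyz - xzy - yzx + zyx$. I would compute the trace of each of the four terms as a linear map $z \mapsto Az B$. By the standard identity, the trace of $z\mapsto AzB$ equals $\tr(A)\tr(B)$, which one sees from the $E_{ij}$ coefficient of $AE_{ij}B$, namely $A_{ii}B_{jj}$. The four terms therefore give
\[
N\tr(xy) - \tr(x)\tr(y) - \tr(y)\tr(x) + N\tr(yx).
\]
Since $x$ and $y$ are traceless, this yields $\langle x,y\rangle_K = 2N\tr(xy)$. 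This establishes the proportionality with constant $2N$.

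\textbf{Step 3 (negative definiteness).} Symmetry and bilinearity follow at once from $\tr(xy)=\tr(yx)$. For $x \in \mathfrak{su}(N)$, anti-Hermiticity gives $\tr(x x) = -\tr(x^\dagger x) = -\sum_{ij}|x_{ij}|^2 \le 0$, with equality only when $x = 0$. The value is real, so the form is real-valued on the real Lie algebra, and it is negative definite. Consequently $-\langle\_,\_\rangle_K$ is a genuine inner product, and this is the sense in which the proposition calls the Killing form a negative-definite inner product.

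The main point requiring care is Step 1, namely justifying that the trace over the real space $\mathfrak{su}(N)$ agrees with the trace over $\mathfrak{sl}(N,\mathbb{C})$ and then over $M(N,\mathbb{C})$. The rest is the routine computation in Step 2.
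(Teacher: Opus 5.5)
Your proof is correct, but it does not follow the paper, which gives no argument: the proposition is attributed to the cited reference (p.~583), and Lemma~\ref{Killing-propto-HS} later uses it as a black box. You instead compute the Killing form directly. You reduce the trace of $\ad_x\circ\ad_y$ over the real space $\mathfrak{su}(N)$ to its trace over $\mathfrak{sl}(N,\mathbb{C})$ and then over $M(N,\mathbb{C})$, and you handle both reductions correctly. You then evaluate the trace with the identity $\tr(z\mapsto AzB)=\tr(A)\tr(B)$, which gives $\langle x,y\rangle_K = 2N\tr(xy)$.

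What your route buys is an explicit constant with a known sign. A bare ``$\propto$'' does not by itself give negative-definiteness, since a negative constant would make the form positive definite. A positive constant $2N$ together with $\tr(x^2) = -\tr(x^\dagger x) < 0$ for $x \neq 0$ does give it. Your constant also makes the sign bookkeeping in Lemma~\ref{Killing-propto-HS} concrete: on $\mathfrak{su}(N)$ one has $\langle A,B\rangle_K = -2N\langle A,B\rangle_{\text{HS}}$.

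The citation route buys brevity and places the fact in the general theory, where the Killing form of a compact semisimple real Lie algebra is negative definite. Its cost is that the constant and its sign stay implicit.
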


    \subsubsection{Lie Group-Algebra Correspondence}
    While they can be defined abstractly, Lie algebras also arise \emph{organically} as the tangent spaces of Lie groups at the identity or equivalently as left-invariant vector fields on Lie groups.
    More concretely, in the setting of matrix Lie groups (which we will exclusively deal with in this work), the Lie algebra of a Lie group is simply the pre-image of the Lie group under the exponential map where 
    \begin{equation}
        \exp(A) = \sum_{m=0}^\infty \frac{A^m}{m!}.
    \end{equation}
    Note that the exponential map as defined here is absolutely convergent.
    Thus, it raises no questions regarding well-definedness.
    The exponential map need not be surjective for arbitrary Lie groups, though it is for Lie groups such as $SU(N)$ which are topologically compact and connected~\cite{Tao_2015}.

    \begin{prop}
        $\mathfrak{su}(N)$ is the Lie algebra of $SU(N)$.
    \end{prop}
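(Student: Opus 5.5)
We will use the characterization recalled just above: for a matrix Lie group $G$, its Lie algebra is $\mathfrak{g} = \{X \in M(N,\mathbb{C}) : \exp(tX) \in G \text{ for all } t \in \mathbb{R}\}$. With this definition the statement splits into two inclusions. In both directions the tools are the same: the identities $\exp(X)^\dagger = \exp(X^\dagger)$ and $\det(\exp(X)) = e^{\tr(X)}$, the fact that $\exp(A)\exp(B) = \exp(A+B)$ whenever $A$ and $B$ commute, and differentiation of a one-parameter family at $t = 0$.

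For the inclusion $\mathfrak{su}(N) \subseteq \operatorname{Lie}(SU(N))$, fix $X$ with $X^\dagger = -X$ and $\tr(X) = 0$, and let $t \in \mathbb{R}$.
\begin{itemize}
\item \emph{Unitarity.} Since $tX$ and $tX^\dagger = -tX$ commute,
\[
\exp(tX)\exp(tX)^\dagger = \exp(tX)\exp(-tX) = \exp(0) = I,
\]
and the same computation in the other order gives $\exp(tX)^\dagger \exp(tX) = I$.
\item \emph{Determinant.} We have $\det \exp(tX) = e^{t\tr(X)} = 1$.
\end{itemize}
The determinant identity follows by triangularizing $X = P T P^{-1}$ with $T$ upper triangular (Schur/Jordan form). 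Then $\exp(X) = P\exp(T)P^{-1}$, and $\exp(T)$ is upper triangular with diagonal entries $e^{\lambda_i}$, so $\det\exp(X) = \prod_i e^{\lambda_i} = e^{\tr X}$.

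For the reverse inclusion, suppose $\exp(tX) \in SU(N)$ for all real $t$. Differentiate $\exp(tX)\exp(tX)^\dagger = I$ at $t=0$, using termwise differentiation of the absolutely convergent series. The derivative of $\exp(tX)$ at $t=0$ is $X$, so this gives $X + X^\dagger = 0$.

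For the trace condition, note that $e^{t\tr(X)} = \det\exp(tX) = 1$ for all $t$. Differentiating at $t=0$ gives $\tr(X) = 0$, so $X \in \mathfrak{su}(N)$.

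The only subtle point is matching the definition of ``Lie algebra of a matrix Lie group''. The paper phrases it loosely as the preimage of the group under $\exp$. Taken literally, that set is strictly larger than $\mathfrak{su}(N)$. For example, when $N=2$ the matrix $X = \mathrm{diag}(2\pi i,0)$ satisfies $\exp(X) = I$ but has nonzero trace. We will therefore state explicitly that we use the one-parameter-subgroup definition (or, equivalently, the tangent space at $I$), and remark that this is the intended meaning.

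The determinant–trace identity is the main technical ingredient; everything else is routine. Surjectivity of $\exp$ onto $SU(N)$, mentioned just above, is not needed for this statement.
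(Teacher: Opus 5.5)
Your proof is correct and complete. The paper gives no proof of this proposition; it is quoted as standard background. So there is no competing argument to compare against, and yours is the usual textbook verification. Both inclusions follow from $\exp(X)^\dagger=\exp(X^\dagger)$, from $\det\exp(X)=e^{\tr X}$ (your triangularization argument for this is fine), and from differentiating the relevant identities at $t=0$.

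Your caveat about the definition is also right and worth keeping. Under the paper's literal wording, ``the pre-image of the Lie group under the exponential map,'' the proposition is false: $X=\mathrm{diag}(2\pi i,0)$ has $\exp X=I$ but $\tr X\neq 0$. That pre-image is not even closed under scaling, since $\exp(\tfrac12 X)=\mathrm{diag}(-1,1)$ has determinant $-1$. This matches the paper's own later remark that such pre-images are not vector spaces. So the one-parameter-subgroup (equivalently, tangent-space) definition you adopt is the one under which the statement holds.

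Finally, both sides carry the matrix commutator as their bracket, so the set equality you establish is all that needs proving.
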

        
\section{Theoretical Results}\label{sec:Theoretical-Results}
    \subsection{Problem Setup}\label{subsec:Problem-Setup}
    Suppose Alice has a quantum computer with accessible Lie algebra $\mathcal{H}$, a $k$-qubit logical Hamiltonian $H_{\text{logical}}$ which she wishes to implement in an error-corrected fashion on her device, and a preferred $[[n,k,d]]$ code $\mathcal{C}$ realized by an encoding circuit $C \in SU(2^n)$.
    She may, however, encounter an obstacle: her logical Hamiltonian may not be \emph{na{\"i}vely implementable} in a sense which we will subsequently make precise.
    In the language of quantum optimal control, Alice's encoded Hamiltonian may not be accessible with respect to her device.
    Alice is not without recourse, though, for she knows that, in the context of error-corrected computation, it suffices to realize any Hamiltonian $H$ which is \emph{logically equivalent} to the na{\"i}ve Hamiltonian in the sense that the former and latter have the same action on the codespace.

    \begin{definition}\label{Def:K}
        We (non-uniquely) define an orthogonal linear map $K : \mathbb{C}^{2^n} \to \mathbb{C}^{2^n}$ by its action 
        \begin{equation}
            \ket{\psi} \oplus \vec{0}_{2^n - 2^k} \mapsto \ket{\psi} \otimes \ket{\vec{0}}_{2^{n-k}}
        \end{equation}
        for all $\ket{\psi} \in \mathbb{C}^{2^k}$ and by demanding that
        \begin{equation}
            K((\ket{\psi} \oplus \overline{0})^\perp) = K(\ket{\psi} \otimes \ket{\overline{0}})^\perp.
        \end{equation}
    \end{definition}

    \begin{remark}\label{remark:K}
        Our motivation for defining $K$ is twofold.
        First, we wish to transform Hamiltonians from a direct sum in which our stabilizer and logical subalgebras elements are most naturally expressed to the more familiar (in quantum information) tensor product on which the encoding map acts.
        Second, we demand orthogonality in particular so that the subsets on which the stabilizer and logical subalgebras act non-trivially remain disjoint.
        This property will be seen to imply that any element of the stabilizer subalgebra commutes with any element of the logical subalgebra.
    \end{remark}

    We have at hand two notions of orthogonality: one due to the Killing form on $\mathfrak{su}(n)$ and the other the ordinary inner product on the vector space in which quantum states live.
    Do these two notions coincide?
    In fact they do.

    \begin{lemma}\label{Killing-propto-HS}
        The Killing form $\langle \_, \_ \rangle_\text{K}$ on $\mathfrak{su}(N)$ is proportional to the Hilbert-Schmidt inner product $\langle \_, \_ \rangle_{\text{HS}}$.
    \end{lemma}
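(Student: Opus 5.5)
The plan is to reduce the statement to the earlier proposition that on $\mathfrak{su}(N)$ the Killing form satisfies $\langle x,y\rangle_K \propto \tr(xy)$. Then we only need to relate $\tr(xy)$ to the Hilbert--Schmidt inner product $\langle x,y\rangle_{\text{HS}} = \tr(x^\dagger y)$.

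For this, we use the defining condition of $\mathfrak{su}(N)$, namely $x = -x^\dagger$. Substituting $x^\dagger = -x$ gives
\begin{equation}
  \langle x,y\rangle_{\text{HS}} = \tr(x^\dagger y) = -\tr(xy)
\end{equation}
for all $x,y\in\mathfrak{su}(N)$. Combining this with the proposition gives $\langle x,y\rangle_K = c\,\tr(xy) = -c\,\langle x,y\rangle_{\text{HS}}$, with a constant $c$ independent of $x$ and $y$.

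To make the constant explicit, and to avoid relying on the citation, we would compute $\tr(\ad_x\circ\ad_y)$ directly. Working first on $\mathfrak{gl}(N,\mathbb{C})$ with the basis of matrix units $E_{ij}$, we have $\ad_x\ad_y(z) = xyz - xzy - yzx + zyx$. Taking the trace of each term as an operator on $M(N,\mathbb{C})$ gives
\begin{equation}
  2N\tr(xy) - 2\tr(x)\tr(y).
\end{equation}
For traceless $x,y$ this is $2N\tr(xy)$.

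One subtlety is that $\mathfrak{su}(N)$ is a real Lie algebra, so its Killing form is the trace over the real space $\mathfrak{su}(N)$, not over $M(N,\mathbb{C})$. We handle this by noting that $\mathfrak{su}(N)\otimes_{\mathbb{R}}\mathbb{C} = \mathfrak{sl}(N,\mathbb{C})$ and that the Killing form is unchanged under complexification. Since $M(N,\mathbb{C}) = \mathfrak{sl}(N,\mathbb{C})\oplus\mathbb{C}I$ and $\ad$ kills $I$, the trace over $\mathfrak{sl}(N,\mathbb{C})$ equals the trace over $M(N,\mathbb{C})$. Hence
\begin{equation}
  \langle x,y\rangle_K = 2N\tr(xy) = -2N\langle x,y\rangle_{\text{HS}}.
\end{equation}
This restriction-to-the-real-form step is the only point needing care; everything else is direct substitution. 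The constant is real because $\tr(x^\dagger y)$ is real on anti-Hermitian matrices (it equals $-\tr(xy)$, which is real). The proportionality constant $-2N$ is negative, consistent with negative-definiteness.
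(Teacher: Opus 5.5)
Your proof is correct and takes essentially the paper's approach: both invoke the earlier proposition $\langle x,y\rangle_K \propto \tr(xy)$ and substitute $x^\dagger = -x$ to get $\tr(xy) = -\tr(x^\dagger y) = -\langle x,y\rangle_{\text{HS}}$. Your extra step, computing $\tr(\ad_x\circ\ad_y) = 2N\tr(xy)$ by complexifying to $\mathfrak{sl}(N,\mathbb{C})$, is also correct and gives the explicit constant $\langle x,y\rangle_K = -2N\langle x,y\rangle_{\text{HS}}$. That makes your argument self-contained, whereas the paper simply cites the proposition and leaves the (negative) constant implicit.
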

    \begin{proof}
        Since the elements of $\mathfrak{su}(N)$ are skew-Hermitian we have
        \begin{multline}
            \langle A,B \rangle_\text{K} \propto - 
            \tr(AB) = 
            - \tr(- A^\dagger B) = \\
            \tr(A^\dagger B) = 
            \langle A,B \rangle_{HS}.
        \end{multline}
    \end{proof}

    \begin{corollary}
        The Killing form and the Hilbert-Schmidt inner product induce identical notions of orthogonality.
    \end{corollary}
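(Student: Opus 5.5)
The corollary follows almost immediately from Lemma~\ref{Killing-propto-HS}, so the plan is to make the proportionality constant explicit and check that it is nonzero. By the earlier proposition on the Killing form of $\mathfrak{su}(N)$, there is a real constant $c$ with $\langle x,y\rangle_K = c\,\tr(xy)$ for all $x,y\in\mathfrak{su}(N)$. By the computation in the proof of Lemma~\ref{Killing-propto-HS}, this gives $\langle A,B\rangle_K = -c\,\langle A,B\rangle_{HS}$.

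The first step is to show $c\neq 0$. This is where the negative-definiteness stated in that proposition is used. Pick any nonzero $A\in\mathfrak{su}(N)$, which exists whenever $N\ge 2$. Then $\langle A,A\rangle_{HS}=\tr(A^\dagger A)>0$, while $\langle A,A\rangle_K<0$. This forces $-c<0$, so $c>0$ and in particular $c\neq 0$. One could also cite the standard value $c=2N$ for the Killing form of $\mathfrak{su}(N)$, but definiteness alone suffices.

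With $\lambda:=-c\neq 0$, for any $A,B\in\mathfrak{su}(N)$ we have $\langle A,B\rangle_K=0$ if and only if $\lambda\langle A,B\rangle_{HS}=0$, which holds if and only if $\langle A,B\rangle_{HS}=0$. Hence the two orthogonality relations coincide. It follows that orthogonal complements of subspaces, and therefore orthogonal projections onto subspaces of $\mathfrak{su}(N)$, are the same under either form.

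The only real obstacle is the degenerate case. If the constant could vanish, or if $N=1$ so that $\mathfrak{su}(1)=0$, the statement is either vacuous or unsupported. The $N=1$ case is trivial because the space is zero. For $N\ge2$, the nondegeneracy argument above settles it. I will also note that, since both forms are real-valued on $\mathfrak{su}(N)$ viewed as a real vector space, there is no issue with complex scalars.
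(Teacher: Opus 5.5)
Your proposal is correct and follows the paper's route: the paper's proof simply says the corollary is immediate from Lemma~\ref{Killing-propto-HS}. Your additional check that the proportionality constant is nonzero (indeed $c>0$, by negative-definiteness) makes explicit a point the paper leaves implicit, and that point is exactly what justifies the two-way equivalence of orthogonality.
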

    \begin{proof}
        This follows immediately from Lemma \ref{Killing-propto-HS}.
    \end{proof}

    \begin{lemma}
        Let $\langle\_,\_\rangle$ be the ordinary inner product on $\mathbb{C}^{m \times m}$ and suppose $A,B \in \mathbb{C}^{m \times m}$.
        Then we have
        \begin{equation}
            \langle A,B \rangle_{\text{HS}} = \langle\text{vec}\left(A\right), \text{vec}\left(B\right)\rangle
        \end{equation}
    \end{lemma}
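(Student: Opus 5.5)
The plan is a direct computation. Expand both sides in coordinates and match them term by term. I will fix the convention that $\text{vec}$ stacks the columns of a matrix, so that for $A = (a_{ij})$ the vector $\text{vec}(A) \in \mathbb{C}^{m^2}$ has entry $a_{ij}$ in position $(j-1)m + i$. I will take the ordinary inner product on $\mathbb{C}^{m^2}$ to be $\langle u, v\rangle = \sum_\ell \overline{u_\ell} v_\ell$, conjugate-linear in the first slot. This matches the convention implicit in the proof of Lemma \ref{Killing-propto-HS}, where $\langle A,B\rangle_{\text{HS}} = \tr(A^\dagger B)$.

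First, I will expand the Hilbert--Schmidt side entrywise:
\begin{equation}
    \tr(A^\dagger B) = \sum_{j=1}^m (A^\dagger B)_{jj} = \sum_{j=1}^m \sum_{i=1}^m (A^\dagger)_{ji} b_{ij} = \sum_{i,j=1}^m \overline{a_{ij}}\, b_{ij}.
\end{equation}

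Second, I will expand the vectorized side. Since $\text{vec}$ is a bijection from index pairs $(i,j)$ to positions $\ell = (j-1)m+i$, reindexing the sum gives
\begin{equation}
    \langle \text{vec}(A), \text{vec}(B)\rangle = \sum_{\ell=1}^{m^2} \overline{\text{vec}(A)_\ell}\, \text{vec}(B)_\ell = \sum_{i,j=1}^m \overline{a_{ij}}\, b_{ij}.
\end{equation}
The two expressions coincide, which proves the claim.

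There is no real obstacle; the only point needing care is consistency of conventions. Row-stacking instead of column-stacking yields the same sum under a different bijection, so the result is convention-independent. If the inner product were taken to be conjugate-linear in the second slot instead, the same argument goes through by writing the Hilbert--Schmidt form as $\tr(B^\dagger A)$. I would add a one-line remark noting that the identity holds for all complex matrices, not just elements of $\mathfrak{su}(m)$. Combined with Lemma \ref{Killing-propto-HS}, this means Killing-orthogonality on $\mathfrak{su}(2^n)$ is exactly Euclidean orthogonality of the vectorizations, which is what the later least-squares formulation needs.
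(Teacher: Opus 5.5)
Your proposal is correct and follows the same route as the paper: expand $\tr(A^\dagger B)$ entrywise as $\sum_{i,j}\overline{a_{ij}}\,b_{ij}$ and recognize this as the Euclidean inner product of the vectorizations. Your version is just more explicit about the vec convention and the reindexing, and it uses the correct upper limit $m$ on both sums where the paper's display writes $n$.
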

    \begin{proof}
        We have
        \begin{multline}
            \langle A,B \rangle_{\text{HS}} = 
            \tr(A^\dagger B) = \\
            \sum_{i=1}^m \sum_{j=1}^n \overline{A_{i,j}} B_{i,j} = 
            \langle\text{vec}\left(A\right), \text{vec}\left(B\right)\rangle
        \end{multline}
    \end{proof}

    Having clarified the relationship between our two notions of orthogonality, we turn to the task of explicitly describing the Hamiltonians which act purely on the codespace and its orthogonal complement.

    \begin{definition}\label{def:stabilizer-subalgebra}
        We call 
        \begin{equation}
            \mathfrak{su}(\mathcal{C^\perp}) := C K \left(0_{2^k} \oplus \mathfrak{su}(2^n - 2^k) \right) K^\dagger C^\dagger
        \end{equation}
        the \emph{stabilizer subalgebra} for $\mathcal{C}$ where $\oplus$ denotes the direct sum.
    \end{definition}

    \begin{remark}
        We note the remarkable correspondence between our definition of the stabilizer subalgebra and the form of the \emph{gauge group} due to Kuehnke \emph{et al.}~~\cite[Theorem~2]{kuehnke2025hardwaretailoredlogicalcliffordcircuits}.
        Namely, the form of their gauge matrix in Equation (12) is the Clifford unitary equivalent of our Hamiltonian perspective on the stabilizer subalgebra.
    \end{remark}

    \begin{definition}\label{def:logical-subalgebra}
        Let $C \in SU(2^n)$ be the encoding circuit for an $[[n,k,d]]$ code $\mathcal{C}$.
        We call 
        \begin{equation}
            \mathfrak{su}(\mathcal{C}) := CK\left(\mathfrak{su}(2^k) \oplus 0_{2^n-2^k} \right)K^\dagger C^\dagger
        \end{equation}
        the \emph{logical subalgebra} where $\oplus$ denotes the direct sum.
    \end{definition}

    \begin{remark}
        A well-known identity from Lie theory states that for matrix Lie groups $G$, we have
        \begin{equation}
            \exp(g A g^{-1}) = g \exp(A) g^{-1}
        \end{equation}
        for $g \in G$ and $A \in \mathfrak{g}$ the Lie algebra of $G$.
        In the context of error-corrected quantum computation, this is to say that conjugating a Hamiltonian by an encoding map and then exponentiating is strictly equivalent to first exponentiating and then conjugating.
        Thus, notions of logical and stabilizer unitary operators induce satisfactory notions of logical and stabilizer Hamiltonians.
    \end{remark}

    \begin{remark}
        As promised, we now explicate what we mean when we say that Alice's logical Hamiltonian is or is not na{\"i}vely implementable on her device with accessible Lie subalgebra $\mathcal{H}$.
        As in Definition \ref{def:logical-subalgebra}, a $k$-qubit logical Hamiltonian on a code with encoding map $C$ has a physical form
        \begin{equation}
            C K\left(H_{\text{logical}} \oplus 0_{2^n - 2^k}\right) K^\dagger C^\dagger.
        \end{equation}
        This is the \emph{na{\"i}ve encoding}.
        We thus say $H_{\text{logical}}$ is \emph{na{\"i}vely accessible} if and only if
        \begin{equation}
            C K \left(H_{\text{logical}} \oplus 0_{2^n - 2^k}\right) K^\dagger C^\dagger \in \mathcal{H}.
        \end{equation}
    \end{remark}

    \begin{lemma}\label{lemma:k-correct}
        The naïve implementation of $H_{\text{logical}}$ is in fact an implementation of $H_{\text{logical}}$, i.e. it has the desired logical action on the codespace.
    \end{lemma}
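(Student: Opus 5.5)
The plan is to make ``logical action on the codespace'' precise and then verify it by direct computation with the definitions of $C$ and $K$. Encoding a logical state $\ket{\psi} \in \mathbb{C}^{2^k}$ gives the physical state $C(\ket{\psi} \otimes \ket{\vec{0}}_{2^{n-k}})$. By Definition \ref{Def:K} this equals $CK(\ket{\psi} \oplus \vec{0}_{2^n-2^k})$. Writing $E := CK\iota$, with $\iota : \ket{\psi} \mapsto \ket{\psi} \oplus \vec{0}$ the inclusion, the codespace is $\mathcal{C} = E(\mathbb{C}^{2^k})$. The claim to prove is then
\begin{equation}
H_{\text{naive}} E \ket{\psi} = E H_{\text{logical}} \ket{\psi} \quad \text{for all } \ket{\psi} \in \mathbb{C}^{2^k},
\end{equation}
where $H_{\text{naive}} := CK(H_{\text{logical}} \oplus 0_{2^n-2^k})K^\dagger C^\dagger$. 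This says that $H_{\text{naive}}$ restricted to $\mathcal{C}$ is the encoded logical Hamiltonian. As a corollary I will also record the same identity for $\exp(tH_{\text{naive}})$, using the remark on $\exp(gAg^{-1}) = g\exp(A)g^{-1}$.

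The computation has three steps, taken in order.
\begin{enumerate}
\item Since $C \in SU(2^n)$, we have $C^\dagger C = I$.
\item Since $K$ is orthogonal, i.e.\ unitary, as demanded in Definition \ref{Def:K}, we have $K^\dagger K = I$.
\item Therefore
\begin{equation}
H_{\text{naive}} E\ket{\psi} = CK(H_{\text{logical}} \oplus 0)(\ket{\psi} \oplus \vec{0}) = CK(H_{\text{logical}}\ket{\psi} \oplus \vec{0}) = E H_{\text{logical}}\ket{\psi}.
\end{equation}
\end{enumerate}
The middle equality is just block-diagonal multiplication for direct sums. To show that $H_{\text{naive}}$ does not leak out of the codespace, I will also note that it annihilates $\mathcal{C}^\perp$. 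That complement is $CK(0 \oplus \mathbb{C}^{2^n-2^k})$, using the orthogonality requirement in Definition \ref{Def:K} together with the unitarity of $C$, and on it $(H_{\text{logical}} \oplus 0)$ acts as zero. The exponential then acts as $\exp(tH_{\text{logical}})$ on $\mathcal{C}$ and as the identity on $\mathcal{C}^\perp$.

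The main obstacle is not the algebra but the definition of $K$. Definition \ref{Def:K} only specifies $K$ on the first summand and asks that orthogonal complements be preserved, yet the argument needs $K^\dagger K = I$, not merely ``orthogonal'' in a real sense. I will argue that $K$ is an isometry on $\mathbb{C}^{2^k} \oplus 0$, since $\|\ket{\psi} \otimes \ket{\vec{0}}\| = \|\ket{\psi}\|$, and that it is chosen as an isometry on the complement. Together with preservation of the complement, this makes $K$ unitary. Only $K^\dagger K$ acting on vectors of the form $\ket{\psi} \oplus \vec{0}$ is actually used, and that follows from the isometry on the first summand plus preservation of the complement.
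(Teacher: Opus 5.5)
Your proof is correct and follows essentially the same route as the paper. Both arguments rest on $C$ and $K$ being unitary, so that conjugation by $CK$ passes through the exponential, together with the block rule $(A \oplus B)(C \oplus D) = AC \oplus BD$ applied to $\ket{\psi} \oplus \vec{0}_{2^n-2^k}$. The only difference is order: the paper expands the exponential as a power series directly, whereas you first prove $H_{\text{naive}} E = E H_{\text{logical}}$ at the generator level and then exponentiate, which gives the same statement. Your observation that only $K^\dagger K$ on the first summand is needed is a correct refinement.
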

    \begin{proof}
        Note that
        \begin{multline}
            \exp\left(K\left(H_{\text{logical}} \oplus 0_{2^n - 2^k}\right) K^\dagger \right) \ket{\psi} \otimes \ket{\vec{0}}_{2^{n-k}} = \\
            K \exp \left(H_{\text{logical}} \oplus 0_{2^n - 2^k}\right) \ket{\psi} \oplus \vec{0}_{2^n - 2^k} = \\
            K \sum_{m=0}^\infty \frac{1}{m!}(H_{\text{logical}} \oplus 0_{2^n - 2^k})^m \ket{\psi} \oplus \vec{0}_{2^n - 2^k} = \\
            K \sum_{m=0}^\infty \frac{1}{m!}(H_{\text{logical}}^m \oplus \left(0_{2^n - 2^k}\right)^m) \ket{\psi} \oplus \vec{0}_{2^n - 2^k} = \\
            K \left(\exp\left(H_{\text{logical}}\right) \oplus \exp(0_{2^n - 2^k}) \right) \ket{\psi} \oplus \vec{0}_{2^n - 2^k} = \\
            K \left(\exp\left(H_{\text{logical}}\right) \ket{\psi} \oplus \exp(0_{2^n - 2^k}) \vec{0}_{2^n - 2^k} \right) = \\
            K \left( \exp\left(H_{\text{logical}}\right) \ket{\psi} \oplus \vec{0}_{2^n - 2^k}\right) = \\
            \exp\left(H_{\text{logical}}\right) \ket{\psi} \otimes \ket{\vec{0}}_{2^{n-k}}
        \end{multline}
        where we used the fact that
        \begin{equation}
            (A \oplus B)(C \oplus D) = AC \oplus BD.
        \end{equation}
        Then we have
        \begin{multline}
            \exp\left(C K\left(H_{\text{logical}} \oplus 0_{2^n - 2^k}\right) K^\dagger C^\dagger\right) \\
            C\left(\ket{\psi} \otimes \ket{\vec{0}}_{2^{n-k}}\right)C^\dagger = \\
            C \exp\left(K\left(H_{\text{logical}} \oplus 0_{2^n - 2^k}\right) K^\dagger \right) C^\dagger \\
            C\left(\ket{\psi} \otimes \ket{\vec{0}}_{2^{n-k}}\right)C^\dagger = \\
            C \exp\left(K\left(H_{\text{logical}} \oplus 0_{2^n - 2^k}\right) K^\dagger \right)\\
            \left(\ket{\psi} \otimes \ket{\vec{0}}_{2^{n-k}}\right)C^\dagger = \\
            C \left(\exp\left(H_{\text{logical}}\right) \ket{\psi} \otimes \ket{\vec{0}}_{2^{n-k}}\right) C
        \end{multline}
    \end{proof}

    \begin{lemma}\label{commutation}
        Fix a code $\mathcal{C}$.
        Then the elements of the logical subalgebra for $\mathcal{C}$ commute with the elements of the stabilizer subalgebra for $\mathcal{C}$.
    \end{lemma}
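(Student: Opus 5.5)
The plan is to reduce the statement to the block structure of the direct sum and then transport it through conjugation by the unitary $CK$. Take an arbitrary element $X$ of the logical subalgebra and an arbitrary element $Y$ of the stabilizer subalgebra. By Definitions \ref{def:logical-subalgebra} and \ref{def:stabilizer-subalgebra}, we may write
\begin{equation}
X = CK\left(A \oplus 0_{2^n-2^k}\right)K^\dagger C^\dagger, \qquad Y = CK\left(0_{2^k} \oplus B\right)K^\dagger C^\dagger
\end{equation}
with $A \in \mathfrak{su}(2^k)$ and $B \in \mathfrak{su}(2^n-2^k)$. The goal is to show $[X,Y]=0$.

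First, we use that $K$ is orthogonal (unitary), so $K^\dagger K = I$. We also use that $C \in SU(2^n)$, so $C^\dagger C = I$. Hence
\begin{equation}
XY = CK\left(A \oplus 0\right)\left(0 \oplus B\right)K^\dagger C^\dagger.
\end{equation}
The analogous expression holds for $YX$.

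Second, we apply the identity $(A \oplus B)(C \oplus D) = AC \oplus BD$, already used in the proof of Lemma \ref{lemma:k-correct}. It gives
\begin{equation}
(A \oplus 0)(0 \oplus B) = 0 \oplus 0 = (0 \oplus B)(A \oplus 0).
\end{equation}
Therefore $XY = 0 = YX$, and the commutator vanishes. In fact this shows something stronger: the products themselves vanish, not merely their difference. This is consistent with Remark \ref{remark:K}, which anticipates that the supports of the two subalgebras are disjoint.

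The only step requiring care is the first one: justifying that the inner factors $K^\dagger C^\dagger C K$ collapse to the identity. This needs $K$ to be genuinely unitary as a linear map on $\mathbb{C}^{2^n}$. Definition \ref{Def:K} only specifies $K$ on the subspace $\mathbb{C}^{2^k} \oplus 0$ and imposes a complement condition. I would therefore argue briefly that the map described there sends an orthonormal basis of $\mathbb{C}^{2^k}\oplus 0$ to an orthonormal set. It also maps the orthogonal complement onto the orthogonal complement of the image, so it can be (and is, by the definition's stipulation of orthogonality) extended to a unitary. Beyond this, the argument is a direct block-matrix computation.
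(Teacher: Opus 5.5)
Your proof is correct and is essentially the paper's argument: the paper observes that $A \oplus 0$ and $0 \oplus B$ act non-trivially on disjoint blocks and hence commute, and treats conjugation by $CK$ as a change of basis. You make that change of basis explicit (checking $K^\dagger C^\dagger C K = I$ from the unitarity of $C$ and the stipulated orthogonality of $K$) and also note the slightly stronger fact that both products $XY$ and $YX$ vanish.
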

    \begin{proof}
        Note that $A \oplus 0$ commutes with $0 \oplus B$ since these operators act non-trivially on disjoint subsets of $\mathbb{C}^{2^n}$.
        Up to a change of basis, this is exactly the content of Lemma \ref{commutation}.
    \end{proof}
    
    \begin{theorem}\label{thm:stabilizer-equivalence}
        Suppose $H_{\text{logical}} \in \mathfrak{su}(2^k)$ and $H \in \mathfrak{su}\left(2^n\right)$ is the encoded form of some logical operator in the sense that $H$ is block diagonal with respect to the decomposition $\mathbb{C}^{2^n} = \mathcal{C} \oplus \mathcal{C}^\perp$.
        Let $C \in SU(2^n)$ be the encoding circuit for an $[[n,k,d]]$ code $\mathcal{C}$.
        Then $\exp(CK (H_{\text{logical}} \oplus 0_{2^n - 2^k})K^\dagger C^\dagger)$ and $\exp(H)$ have the same action on $\mathcal{C}$ if and only if there exists $H_{\text{stabilizer}} \in \mathfrak{su}(2^n - 2^k)$ such that
        \begin{multline}
            H = CK \left(H_{\text{logical}} \oplus 0_{2^n - 2^k}\right) K^\dagger C^\dagger + \\
            C K \left(0_{2^k} \oplus H_{\text{stabilizer}}\right) K^\dagger C^\dagger.
        \end{multline}
    \end{theorem}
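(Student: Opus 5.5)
We plan to work throughout in the ``unencoded frame'' obtained by conjugating by the unitary $W := CK$. By Definition~\ref{Def:K} and the definition of the encoding circuit, $W$ maps $\mathbb{C}^{2^k} \oplus \vec{0}_{2^n-2^k}$ onto $\mathcal{C}$. Because $W$ is unitary, it also maps the orthogonal complement $\vec{0}_{2^k} \oplus \mathbb{C}^{2^n-2^k}$ onto $\mathcal{C}^\perp$. Hence an operator $H$ is block diagonal with respect to $\mathcal{C} \oplus \mathcal{C}^\perp$ exactly when $H' := W^\dagger H W$ has the form $A \oplus B$. Since $H \in \mathfrak{su}(2^n)$, both $A$ and $B$ are skew-Hermitian and $\tr A + \tr B = 0$. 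By the identity $\exp(gAg^{-1}) = g\exp(A)g^{-1}$ recorded in the remark before Lemma~\ref{lemma:k-correct}, every statement about $\exp(H)$ on $\mathcal{C}$ becomes a statement about $\exp(H') = \exp(A) \oplus \exp(B)$ on the first summand, using $(A\oplus B)(C \oplus D) = AC \oplus BD$ as in Lemma~\ref{lemma:k-correct}.

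\emph{The ``if'' direction.} Suppose $H = W(H_{\text{logical}} \oplus 0)W^\dagger + W(0 \oplus H_{\text{stabilizer}})W^\dagger$. By Lemma~\ref{commutation} the two summands commute, so $\exp(H)$ factors as the product of their exponentials. The stabilizer factor equals $W(I_{2^k} \oplus \exp(H_{\text{stabilizer}}))W^\dagger$, which acts as the identity on $\mathcal{C} = W(\mathbb{C}^{2^k} \oplus \vec{0})$. The remaining factor is the naïve implementation, whose action on $\mathcal{C}$ is given by Lemma~\ref{lemma:k-correct}. Alternatively, one can simply read off $\exp(H') = \exp(H_{\text{logical}}) \oplus \exp(H_{\text{stabilizer}})$ directly.

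\emph{The ``only if'' direction.} Writing $H' = A \oplus B$, the hypothesis gives $\exp(A) = \exp(H_{\text{logical}})$ on $\mathbb{C}^{2^k}$. We want to conclude $A = H_{\text{logical}}$. It would then follow that $\tr B = -\tr A = -\tr H_{\text{logical}} = 0$, so $B =: H_{\text{stabilizer}} \in \mathfrak{su}(2^n-2^k)$, and conjugating back by $W$ yields the claimed decomposition.

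This step is the main obstacle. The exponential map is not injective: for instance, $A = H_{\text{logical}} + 2\pi i P$ with $P$ a projector commuting with $H_{\text{logical}}$ has the same exponential. In that case $A$ need not equal $H_{\text{logical}}$, and $A$ need not even be traceless. The equivalence must therefore be read at the level of Hamiltonians, i.e.\ as equality of the generated one-parameter families: $\exp(tH)$ and $\exp(tW(H_{\text{logical}}\oplus 0)W^\dagger)$ agree on $\mathcal{C}$ for all $t$. I would adopt this reading, which is the natural one for Hamiltonian-based compilation. Then $\exp(tA) = \exp(tH_{\text{logical}})$ for all $t$, and differentiating at $t = 0$ gives $A = H_{\text{logical}}$. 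A fallback reading is to restrict to a neighbourhood of $0$ on which $\exp$ is injective, equivalently using principal logarithms. I would state explicitly which reading is used, and I expect the remainder of the argument to be routine bookkeeping with $W$.
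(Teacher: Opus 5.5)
Your ``if'' direction is the paper's argument: the summand $W(0\oplus H_{\text{stabilizer}})W^\dagger$ exponentiates to the identity on $\mathcal{C}$. Your ``only if'' direction genuinely departs from the paper's, and it is the more careful of the two.

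The paper handles necessity in one sentence, ``by hypothesis, $H$ can be written as a sum of a stabilizer and logical part,'' and that sentence never uses the hypothesis on the exponentials. Block-diagonality alone only gives $W^\dagger H W = A \oplus B$ with $\tr A + \tr B = 0$. It does not even give a splitting into $\mathfrak{su}(\mathcal{C}) + \mathfrak{su}(\mathcal{C}^\perp)$: for instance, $W\bigl(\tfrac{i}{2^k} I \oplus \tfrac{-i}{2^n-2^k} I\bigr)W^\dagger$ is block diagonal and traceless, yet its blocks are not traceless.

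Your diagnosis is right: the literal ``only if'' is false. Concretely, take $H_{\text{logical}} = 0$ and $W^\dagger H W = 2\pi i\, e_1 e_1^\dagger \oplus (-2\pi i\, f_1 f_1^\dagger)$, where $e_1 \in \mathbb{C}^{2^k}$ and $f_1 \in \mathbb{C}^{2^n-2^k}$ are unit vectors. Then $H \in \mathfrak{su}(2^n)$ is block diagonal and $\exp(H) = I$ agrees with the na{\"i}ve exponential. Yet $H$ has a nonzero codespace block, and its complementary block has trace $-2\pi i$.

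So the paper's route is shorter, but it only shows that $H$ is some codespace block with the correct exponential plus some block on $\mathcal{C}^\perp$. Your route proves a true theorem by strengthening the hypothesis in one of two ways. The first is agreement of $\exp(tH)$ and $\exp\bigl(tW(H_{\text{logical}}\oplus 0)W^\dagger\bigr)$ on $\mathcal{C}$ for all $t$, then differentiating at $t=0$. The second is requiring both $A$ and $H_{\text{logical}}$ to have spectrum in $i(-\pi,\pi)$, where the principal logarithm inverts $\exp$ on skew-Hermitian matrices; this matches the paper's own use of the principal logarithm in Section~\ref{sec:Example}. Either way, $A = H_{\text{logical}}$ forces $\tr B = 0$, so $B \in \mathfrak{su}(2^n-2^k)$.

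When you write it up, note that the ``if'' direction still holds under the strengthened reading: just apply it to $tH_{\text{logical}}$ and $tH_{\text{stabilizer}}$.
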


    \begin{proof}
        Observe that the left-hand summand is the only one with non-trivial action on the codespace.
        This implies sufficiency.
        To see necessity, note that by hypothesis, $H$ can be written as a sum of a stabilizer and logical part.
    \end{proof}

    \begin{remark}
        Theorem \ref{thm:stabilizer-equivalence} demonstrates that the ``physical" degrees of freedom, that is, the equivalence class of special unitary operators logically equivalent to a distinguished logical special unitary operator are parametrized by the elements of the stabilizer subalgebra.
        That is, any unitary with trivial action on the codespace has non-empty image in the stabilizer subalgebra under the exponential map.
        We have not, however, parametrized all degrees of freedom prior to exponentiation, i.e. the pre-image under the exponential map of the set of stabilizer unitary operators is not equal to the stabilizer subalgebra. This pre-image is not a vector space at all.
    \end{remark}

    \begin{lemma}\label{maximal-subspace}
        The stabilizer subalgebra is the maximal vector space contained within the pre-image under the exponential map of the set of unitary operators which act trivially on the codespace. 
    \end{lemma}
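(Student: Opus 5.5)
The plan is to show two things. First, the stabilizer subalgebra $\mathfrak{su}(\mathcal{C}^\perp)$ of Definition \ref{def:stabilizer-subalgebra} lies inside
\[
P := \{H \in \mathfrak{su}(2^n) : \exp(H)\ket{v} = \ket{v} \text{ for all } \ket{v} \in \mathcal{C}\}.
\]
Second, \emph{every} real vector subspace $V \subseteq P$ is contained in $\mathfrak{su}(\mathcal{C}^\perp)$. Together these say that $\mathfrak{su}(\mathcal{C}^\perp)$ is the largest vector space inside $P$, in the strong sense that it contains all others, not merely that it is maximal under inclusion. Throughout I work in the basis given by $CK$. Write $W := CK$, which is unitary. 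Then $\mathcal{C} = W(\mathbb{C}^{2^k} \oplus 0)$ and $\mathcal{C}^\perp = W(0 \oplus \mathbb{C}^{2^n-2^k})$, using the orthogonality built into Definition \ref{Def:K}.

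For the containment, take $H = W(0_{2^k} \oplus B)W^\dagger$ with $B \in \mathfrak{su}(2^n-2^k)$. By the conjugation identity $\exp(gAg^{-1}) = g\exp(A)g^{-1}$ and the block computation used in Lemma \ref{lemma:k-correct},
\[
\exp(H) = W\left(I_{2^k} \oplus \exp(B)\right)W^\dagger,
\]
which is the identity on $\mathcal{C}$. Hence $\mathfrak{su}(\mathcal{C}^\perp) \subseteq P$.

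For maximality, let $V \subseteq P$ be a real subspace and fix $H \in V$. Since $V$ is a vector space, $tH \in V \subseteq P$ for every $t \in \mathbb{R}$. So $\exp(tH)\ket{v} = \ket{v}$ for all $t$ and all $\ket{v} \in \mathcal{C}$. Differentiating at $t=0$ gives $H\ket{v} = 0$, so $H$ annihilates $\mathcal{C}$.

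Next I use skew-Hermiticity to get block diagonality. For $\ket{v} \in \mathcal{C}$ and $\ket{w} \in \mathcal{C}^\perp$,
\[
\langle v, Hw\rangle = -\langle Hv, w\rangle = 0,
\]
so $H$ maps $\mathcal{C}^\perp$ into itself. Therefore $W^\dagger H W = 0_{2^k} \oplus B$ for some $B$. This $B$ is skew-Hermitian because $H$ is and $W$ is unitary, and $\tr B = \tr H = 0$. Thus $B \in \mathfrak{su}(2^n - 2^k)$ and $H \in \mathfrak{su}(\mathcal{C}^\perp)$. As $H \in V$ was arbitrary, $V \subseteq \mathfrak{su}(\mathcal{C}^\perp)$.

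The step I expect to need the most care is the interpretation of ``acts trivially''. The argument above reads it as acting as the identity on $\mathcal{C}$, consistent with the paper working in $SU$ and taking $\exp(0)$ as the trivial logical action. If instead a global phase on the codespace were allowed, differentiation would only give $H\ket{v} = i\lambda\ket{v}$. The traceless element $i\left(a I_{2^k} \oplus b I_{2^n-2^k}\right)$, conjugated by $W$, would then enlarge the space. So I would state the identity convention explicitly before the derivative step. The remaining ingredient, the differentiation $\frac{d}{dt}\exp(tH)\big|_{t=0} = H$, is immediate from the absolutely convergent power series for $\exp$ recalled in Section \ref{subsec:Lie-Theory}.
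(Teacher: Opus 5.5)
Your proof is correct, but it reaches the conclusion by a different mechanism than the paper's proof.

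The paper argues at the group level. A unitary acting as the identity on $\mathcal{C}$ has block form $I \oplus D$, with unitarity killing the off-diagonal block. For $X$ in a subspace of the pre-image, the paper then combines two facts: $\exp$ is a local diffeomorphism at $0$, and $\exp$ is surjective onto $SU(2^n-2^k)$. From these it concludes that $\epsilon X$ lies in the block space $0 \oplus \mathfrak{su}(2^n-2^k)$ for small $\epsilon$, and then rescales.

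You argue infinitesimally instead. Differentiating $t \mapsto \exp(tH)v$ at $t=0$ gives $Hv = 0$ on $\mathcal{C}$ directly. Skew-Hermiticity, $\langle v, Hw\rangle = -\langle Hv, w\rangle$, then plays the role of the paper's unitarity step and forces the off-diagonal block to vanish.

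Your route needs only the power series and the adjoint. It also avoids a point the paper passes over. Surjectivity supplies only \emph{some} $0 \oplus Y$ with $\exp(0\oplus Y) = \exp(\epsilon X)$. Identifying it with $\epsilon X$ requires choosing $Y$ inside the injectivity neighbourhood, e.g.\ the principal logarithm of the near-identity block $D$, and the paper does not say this.

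You also verify the inclusion $\mathfrak{su}(\mathcal{C}^\perp) \subseteq P$ explicitly, which the paper leaves implicit. Your identity-on-$\mathcal{C}$ convention is the one the paper uses: its proof sets the codespace block to $I$ by hypothesis. So your caveat about global phases is a correct side remark, not a discrepancy.

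What the paper's version gives you is an explicit group-level identification of the trivially-acting unitaries with $SU(2^n-2^k)$, which connects to its preceding remark about the pre-image of stabilizer unitaries. For the lemma as stated, yours is the cleaner and more fully rigorous proof.
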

    \begin{proof}
        Consider $U \in SU(2^n)$.
        With respect to the decomposition $\mathbb{C}^{2^n} = \mathcal{C} \oplus \mathcal{C}^\perp$, we have a block representation:
        \begin{equation}
            U = 
            \begin{bmatrix}
                A & B \\
                C & D
            \end{bmatrix}
        \end{equation}
        By hypothesis, $A = I$ and $C = 0$.
        Unitarity then forces $B = 0$.
        Therefore, the set of unitary operators which act trivially on the codespace correspond to the possible values of $D$, i.e. to $SU(2^n - 2^k).$
        Now, let $L$ be the vector space of block matrices in $\mathfrak{su}\left(2^N\right)$ under the decomposition $\mathbb{C}^{2^n} = \mathcal{C} \oplus \mathcal{C}^\perp$ with only the lower-right block possibly nonvanishing and let $\mathcal{D}$ be the set of block diagonal special unitary operators under the same decomposition.
        It suffices to check that if $x \in W \subseteq \mathcal{D}$ for $W$ a vector space then $x \in L$; we can conjugate by $CK$ to recover the familiar notion of the stabilizer subalgebra.
        If $x \in W \subseteq \mathcal{D}$ for $W$ a vector space then in particular for all $\epsilon > 0$ we have $\epsilon x \in W$.
        Recall that the exponential map is a diffeomorphism in some neighborhood of the identity \cite[Corollary~3.44]{Hall:371445}.
        It is thus a bijection therein in particular.
        Therefore, for sufficiently small $\epsilon$, we have that
        \begin{equation}
            \exp^{-1} \left( \exp(\epsilon X) \right)
        \end{equation}
        is a singleton.
        Moreover, this pre-image is non-empty in $L$ by the surjectivity of the exponential map for $SU(N)$.
        Therefore, $\epsilon X \in L$.
        Since $L$ is manifestly a vector space we have $\frac{1}{\epsilon} \epsilon X = X \in L$.
    \end{proof}

    \begin{remark}
        The operational implication of Lemma \ref{maximal-subspace} is that if we demand a linear structure on the solution space of logically trivial Hamiltonians then the stabilizer subalgebra furnishes us with the most general approach.
    \end{remark}

    \subsection{Device Compatibility}\label{subsec:Device-Compatibility}
    The operational implication of Theorem \ref{thm:stabilizer-equivalence} is that in order to choose among the physical Hamiltonians logically equivalent to her target with respect to $\mathcal{C}$, it suffices for Alice to choose an element of the stabilizer subalgebra of $\mathcal{C}$.
    How should she choose a stabilizer in such a way that compatibility with her device is maximized?
    We first need to formalize a notion of compatibility.
    We consider the setting in which the experimenter has access to a subalgebra $\mathcal{H}$ of freely available Hamiltonians and no access to elements outside $\mathcal{H}$.
    Given that our goal is to choose an element of the stabilizer subalgebra such that the \emph{total physical Hamiltonian} is accessible, we use the following cost function:
    \begin{multline}\label{stabilizer-minimization-problem}
        \argmin_{H_{\text{stabilizer}} \in \mathfrak{su}(2^n - 2^k)} \\
        \|\proj_\mathcal{H}(CK\left(H_{\text{logical}} \oplus 0_{2^n - 2^k}\right)K^\dagger C^\dagger + \\
        C K \left(0_{2^k} \oplus H_{\text{stabilizer}}\right) K^\dagger C^\dagger) - \\
        (C K \left(H_{\text{logical}} \oplus 0_{2^n - 2^k}\right) K^\dagger C^\dagger + \\
        CK \left(0_{2^k} \oplus H_{\text{stabilizer}}\right)K^\dagger  C^\dagger)\|^2.
    \end{multline}
    In fact, a different choice of cost function will have leave our core result (Theorem \ref{thm:projection-minimizer}) essentially unchanged provided the cost function is a $2$-norm of an affine function of the argument $H_{\text{stabilizer}}$.
        
    \subsection{Optimal Choice of Stabilizer}\label{subsec:Optimal-Choice-Of-Stabilizer}
    We claim that (\ref{stabilizer-minimization-problem}) has the form of a least-squares problem.
    \begin{definition}\label{def:M} Define
    \begin{equation}
    \begin{aligned}
        M \colon \mathfrak{su}\left(2^n\right) & \longrightarrow \mathfrak{su}\left(2^n\right) \\
        v & \longmapsto \proj_\mathcal{H} (C v C^\dagger) - C v C^\dagger.
    \end{aligned}
    \end{equation}
    \end{definition}
    \begin{lemma}
        M so defined is linear.
    \end{lemma}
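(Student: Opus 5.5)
We show that $M$ is a difference of two linear maps, $M = \proj_\mathcal{H} \circ \mathrm{Ad}_C - \mathrm{Ad}_C$, where $\mathrm{Ad}_C(v) := C v C^\dagger$. Linear maps on the real vector space $\mathfrak{su}(2^n)$ are closed under composition and under real linear combinations, so this decomposition suffices. The underlying field is $\mathbb{R}$, as stipulated in the definition of $\mathfrak{su}(N)$. Throughout we work with real scalars $a,b \in \mathbb{R}$ and elements $v,w \in \mathfrak{su}(2^n)$.

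\textbf{Step 1: $\mathrm{Ad}_C$ is linear and preserves $\mathfrak{su}(2^n)$.} Linearity is immediate from bilinearity of matrix multiplication:
\begin{equation}
C(av+bw)C^\dagger = aCvC^\dagger + bCwC^\dagger.
\end{equation}
To see that $\mathrm{Ad}_C$ maps into $\mathfrak{su}(2^n)$, we use two facts. First, $(CvC^\dagger)^\dagger = Cv^\dagger C^\dagger = -CvC^\dagger$, so skew-Hermiticity is preserved. Second, $\tr(CvC^\dagger) = \tr(C^\dagger C v) = \tr(v) = 0$ by cyclicity and unitarity, so tracelessness is preserved. Hence $M$ is well defined as a map $\mathfrak{su}(2^n) \to \mathfrak{su}(2^n)$, provided $\proj_\mathcal{H}$ also lands in $\mathfrak{su}(2^n)$.

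\textbf{Step 2: $\proj_\mathcal{H}$ is linear.} Here we must fix what $\proj_\mathcal{H}$ means. We interpret it as the orthogonal projection onto the real subspace $\mathcal{H} \subseteq \mathfrak{su}(2^n)$, with respect to the inner product $-\langle\_,\_\rangle_K$. By Lemma \ref{Killing-propto-HS} and its corollary, this is equivalent to projecting with respect to the real part of the Hilbert--Schmidt inner product. Since $\mathcal{H}$ is a Lie subalgebra, it is in particular a finite-dimensional real subspace, so it admits an orthonormal basis $h_1,\dots,h_r$. The projection then has the explicit form
\begin{equation}
\proj_\mathcal{H}(x) = \sum_{i=1}^r \langle h_i, x\rangle\, h_i.
\end{equation}
Each coefficient $\langle h_i, x\rangle$ is linear in $x$ because the inner product is bilinear over $\mathbb{R}$, so $\proj_\mathcal{H}$ is a linear map into $\mathcal{H} \subseteq \mathfrak{su}(2^n)$.

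\textbf{Step 3: conclude.} Composing, $\proj_\mathcal{H} \circ \mathrm{Ad}_C$ is linear, and therefore so is
\begin{equation}
M = \proj_\mathcal{H} \circ \mathrm{Ad}_C - \mathrm{Ad}_C.
\end{equation}

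The only step that needs care is Step 2. We must be explicit that linearity holds over $\mathbb{R}$: the Hilbert--Schmidt form $\tr(A^\dagger B)$ is real on skew-Hermitian matrices, so it restricts to a genuine real inner product on $\mathfrak{su}(2^n)$. Without this, the projection would not be well defined on a real vector space, and with it the argument above goes through.
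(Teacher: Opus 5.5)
Your proof is correct and takes the same approach as the paper: the paper's one-line argument notes that linear endomorphisms form a ring, so $M$, built from $\proj_\mathcal{H}$ and conjugation by $C$ through composition and subtraction, is linear. You also spell out details the paper leaves implicit, namely why each constituent is $\mathbb{R}$-linear and why conjugation by $C$ keeps you inside $\mathfrak{su}(2^n)$.
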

    \begin{proof}
        Recall that linear endomorphisms of a vector space form a ring under the usual operations.
    \end{proof}

    \begin{definition}\label{def:inclusion}
        We define
        \begin{equation}
        \begin{aligned}
        A : \mathfrak{su}(2^n - 2^k) & \longrightarrow \mathfrak{su}(\mathcal{C}^\perp) \\
        v & \longmapsto K (0_{2^k} \oplus v)K^\dagger 
    \end{aligned}
    \end{equation}
    \end{definition}

    \begin{theorem}\label{thm:projection-minimizer}
        The choice of stabilizer minimizing (\ref{stabilizer-minimization-problem}) is given by
        \begin{equation}
            - (M \circ A)^+ M \left(K (H_{\text{logical}} \oplus 0_{2^n - 2^k})K^\dagger \right)
        \end{equation}
        where $T^+$ denotes the Moore-Penrose Pseudoinverse of $T$.
    \end{theorem}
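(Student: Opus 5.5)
The plan is to rewrite the objective (\ref{stabilizer-minimization-problem}) as $\|(M\circ A)x + b\|^2$ for the real vector space $\mathfrak{su}(2^n-2^k)$. Once it has that form, the standard least-squares theory gives the minimizer as the Moore--Penrose pseudoinverse applied to $-b$.

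\textbf{Step 1 (affine form).} Write $x := H_{\text{stabilizer}}$ and $v_0 := K(H_{\text{logical}}\oplus 0_{2^n-2^k})K^\dagger$. Definition \ref{def:inclusion} gives
\begin{equation}
CK(0_{2^k}\oplus x)K^\dagger C^\dagger = C\,A(x)\,C^\dagger .
\end{equation}
So the total physical Hamiltonian equals $C(v_0 + A(x))C^\dagger$. The quantity inside the norm is then $\proj_{\mathcal{H}}(C w C^\dagger) - C w C^\dagger$ with $w = v_0 + A(x)$, which is exactly $M(v_0 + A(x))$ by Definition \ref{def:M}. By linearity of $M$ (the preceding lemma) and of $A$, the objective becomes
\begin{equation}
\|(M\circ A)(x) + M(v_0)\|^2 .
\end{equation}
Here $A$ is linear because $v\mapsto 0\oplus v$ is linear and conjugation by $K$ is linear. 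I also need $\proj_{\mathcal{H}}$ to be the orthogonal projection onto the real subspace $\mathcal{H}$; then $M$ is a difference of linear maps and so is linear.

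\textbf{Step 2 (choice of inner product).} The norm is the one induced by the Hilbert--Schmidt inner product, equivalently the Killing form up to sign and scale by Lemma \ref{Killing-propto-HS}. The vectorization lemma identifies it with the Euclidean norm on $\text{vec}$ of the matrices. I would fix real orthonormal bases of $\mathfrak{su}(2^n-2^k)$ and $\mathfrak{su}(2^n)$, for instance normalized Pauli-type bases, so that $M\circ A$ becomes a real matrix and the problem is an ordinary real least-squares problem $\min_x \|Tx - y\|_2^2$ with $T = M\circ A$ and $y = -M(v_0)$.

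\textbf{Step 3 (pseudoinverse).} I would invoke the standard fact that $x^\star = T^+ y$ minimizes $\|Tx-y\|$. Indeed, $TT^+$ is the orthogonal projector onto $\operatorname{im} T$, and Pythagoras gives
\begin{equation}
\|Tx-y\|^2 = \|Tx - TT^+y\|^2 + \|(I-TT^+)y\|^2,
\end{equation}
which is minimized at $x = T^+y$. Among all minimizers this one also has minimal norm. Substituting $y$ gives the claimed $-(M\circ A)^+ M(v_0)$.

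\textbf{Main obstacle.} The delicate point is that the theorem's formula uses $M(v_0)$ in ``uncoded'' coordinates, whereas the objective contains $C$-conjugated operators. I must check that the conjugation by $C$ built into $M$ exactly accounts for the outer $C\,\cdot\,C^\dagger$ in (\ref{stabilizer-minimization-problem}), and that it is not applied twice. I must also check that $A$ lands in $\mathfrak{su}(2^n)$ before $C$ is applied rather than in $\mathfrak{su}(\mathcal{C}^\perp)$ as literally written; this is Step 1's bookkeeping.

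A further subtlety is that the minimizer is unique only up to $\ker(M\circ A)$. The statement should therefore be read as ``a (the minimum-norm) minimizer.'' I would remark on this rather than try to prove uniqueness.
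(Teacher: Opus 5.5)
Your proposal is correct and follows the paper's proof: you use linearity of $\proj_{\mathcal{H}}$ to rewrite the objective as $\|(M\circ A)(x)+M(v_0)\|^2$, set $b=-M(v_0)$, and invoke the least-squares property of the Moore--Penrose pseudoinverse, which the paper cites rather than derives via your Pythagorean decomposition. Your caveats are accurate refinements of points the paper glosses over: the real inner-product structure, the fact that $A$ actually lands in $K(0\oplus\mathfrak{su}(2^n-2^k))K^\dagger$ rather than in $\mathfrak{su}(\mathcal{C}^\perp)$, and non-uniqueness modulo $\ker(M\circ A)$ (the paper's proof likewise concludes only with ``a minimizer'').
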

    \begin{proof}
        We have
        \begin{multline}
            \proj_\mathcal{H}\bigg(CK \left(H_{\text{logical}} \oplus 0_{2^n - 2^k}\right)K^\dagger C^\dagger + \\
            CK \left(0_{2^k} \oplus H_{\text{stabilizer}}\right) K^\dagger C^\dagger\bigg) - \\
            \bigg(CK \left(H_{\text{logical}} \oplus 0_{2^n - 2^k}\right)K^\dagger C^\dagger + \\
            CK \left(0_{2^k} \oplus H_{\text{stabilizer}}\right) K^\dagger C^\dagger\bigg) = \\
            \proj_\mathcal{H}\bigg(CK \left(H_{\text{logical}} \oplus 0_{2^n-2^k}\right)K^\dagger C^\dagger\bigg) + \\
            \proj_\mathcal{H}\bigg(CK \left(0_{2^k} \oplus H_{\text{stabilizer}}\right) K^\dagger C^\dagger\bigg) - \\
            \bigg(C K \left(H_{\text{logical}} \oplus 0_{2^n - 2^k}\right)K^\dagger C^\dagger  + \\
            CK \left(0_{2^k} \oplus H_{\text{stabilizer}}\right) K^\dagger C^\dagger\bigg) = \\
            \proj_\mathcal{H}\bigg(CK\left(H_{\text{logical}} \oplus 0_{2^n - 2^k}\right)K^\dagger C^\dagger\bigg) - \\
            \bigg(CK\left(H_{\text{logical}} \oplus 0_{2^n - 2^k}\right)K^\dagger C^\dagger \bigg) + \\
            \proj_\mathcal{H}\bigg(CK \left(0_{2^k} \oplus H_{\text{stabilizer}}\right) K^\dagger C^\dagger \bigg)  - \\
            \bigg(CK \left(0_{2^k} \oplus H_{\text{stabilizer}}\right) K^\dagger C^\dagger\bigg) = \\
            M(K (H_{\text{logical}} \oplus 0_{2^n - 2^k})K^\dagger ) + \\
            M(K (0_{2^k} \oplus H_{\text{stabilizer}})K^\dagger )
        \end{multline}
        hence (\ref{stabilizer-minimization-problem}) becomes
        \begin{multline}\label{simplified-stabilizer-minimization-problem}
            \argmin_{H_{\text{stabilizer}} \in \mathfrak{su}(2^n - 2^k)} \|M \circ A(H_{\text{stabilizer}}) + \\
            M\left(K (H_{\text{logical}} \oplus 0_{2^n - 2^k})K^\dagger \right)\|^2.
        \end{multline}
        If we set $b = - M\left(K (H_{\text{logical}} \oplus 0_{2^n - 2^k})K^\dagger \right)$ then we can see that (\ref{simplified-stabilizer-minimization-problem}) has exactly the usual form
        \begin{equation}
            \argmin_{u \in \mathfrak{su}(2^n - 2^k)} \|(M \circ A)(u) - b\|^2
        \end{equation}
        of a least-squares problem.
        In turn, we may solve it using the Moore-Penrose pseudoinverse of $M \circ A$~\cite[Theorem~6.1]{Barata_2011}.
        We thus have a minimizer
        \begin{equation}
            - (M \circ A)^+ \circ M \big(K (H_{\text{logical}} \oplus 0_{2^n - 2^k})K^\dagger \big).
        \end{equation}
    \end{proof}
    \begin{remark}
        The final physical Hamiltonian we wish to implement is exactly the sum of the na{\"i}ve physical Hamiltonian and the optimal stabilizer correction.
    \end{remark}

    \subsection{Correctibly Accessible Hamiltonians}\label{subsec:Correctibly-Accessible-Hamiltonians}
    \begin{lemma}\label{correctibility-criterion}
        An encoded Hamiltonian is correctible to an accessible one on a code $\mathcal{C}$ if and only if it is an element of
        \begin{equation}
            \mathcal{H} + \mathfrak{su}(\mathcal{C}^\perp).
        \end{equation}
    \end{lemma}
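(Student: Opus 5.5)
The plan is to unwind the definitions and use Theorem \ref{thm:stabilizer-equivalence} to turn the statement into a set-membership identity. Write $H_0 := CK(H_{\text{logical}} \oplus 0_{2^n-2^k})K^\dagger C^\dagger$ for the encoded (naïve) Hamiltonian. We say $H_0$ is \emph{correctible to an accessible one} if some Hamiltonian logically equivalent to $H_0$ lies in $\mathcal{H}$. By Theorem \ref{thm:stabilizer-equivalence}, the Hamiltonians that are block diagonal and logically equivalent to $H_0$ are exactly those of the form
\begin{equation}
    H_0 + CK\left(0_{2^k} \oplus H_{\text{stabilizer}}\right)K^\dagger C^\dagger, \qquad H_{\text{stabilizer}} \in \mathfrak{su}(2^n-2^k).
\end{equation}
By Definition \ref{def:stabilizer-subalgebra}, these are precisely the elements of the coset $H_0 + \mathfrak{su}(\mathcal{C}^\perp)$. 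The lemma therefore reduces to showing
\begin{equation}
    \left(H_0 + \mathfrak{su}(\mathcal{C}^\perp)\right) \cap \mathcal{H} \neq \emptyset \iff H_0 \in \mathcal{H} + \mathfrak{su}(\mathcal{C}^\perp).
\end{equation}

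For the forward direction, suppose $H_0 + S \in \mathcal{H}$ for some $S \in \mathfrak{su}(\mathcal{C}^\perp)$, say $H_0 + S = h$. Then $H_0 = h + (-S)$. Since $\mathfrak{su}(\mathcal{C}^\perp)$ is the image of the real vector space $\mathfrak{su}(2^n-2^k)$ under the linear map $v \mapsto CK(0\oplus v)K^\dagger C^\dagger$, it is a real subspace and is closed under negation. Hence $H_0 \in \mathcal{H} + \mathfrak{su}(\mathcal{C}^\perp)$.

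For the converse, suppose $H_0 = h + S$ with $h \in \mathcal{H}$ and $S \in \mathfrak{su}(\mathcal{C}^\perp)$. Then $H_0 - S = h \in \mathcal{H}$. Writing $-S = CK(0_{2^k} \oplus H_{\text{stabilizer}})K^\dagger C^\dagger$, Theorem \ref{thm:stabilizer-equivalence} (sufficiency) shows that $H_0 - S$ is logically equivalent to $H_0$. So $H_0$ is correctible to the accessible Hamiltonian $h$. Equivalently, in the language of Theorem \ref{thm:projection-minimizer}, the least-squares objective (\ref{stabilizer-minimization-problem}) attains the value $0$.

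The main obstacle is definitional rather than computational. The paper never states formally what ``correctible to an accessible one'' means, so I would first fix it as ``there exists a stabilizer correction making the total Hamiltonian lie in $\mathcal{H}$.'' This matches the setup of Subsection \ref{subsec:Device-Compatibility}. I would also check two technical points. First, the block-diagonality hypothesis of Theorem \ref{thm:stabilizer-equivalence} is harmless here: $H_0$ and every element of $\mathfrak{su}(\mathcal{C}^\perp)$ are block diagonal with respect to $\mathcal{C}\oplus\mathcal{C}^\perp$ after conjugation by $CK$, so only the sufficiency direction of that theorem is needed. Second, $\mathcal{H}$ is only required to be a real vector subspace (a Lie subalgebra suffices), so that $\mathcal{H}+\mathfrak{su}(\mathcal{C}^\perp)$ is a well-defined Minkowski sum and the manipulations above stay inside $\mathfrak{su}(2^n)$.
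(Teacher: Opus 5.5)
Your proof is correct and is the paper's argument spelled out. The paper just observes that being correctibly accessible means differing from an element of $\mathcal{H}$ by an element of $\mathfrak{su}(\mathcal{C}^\perp)$, and that is exactly your two-direction manipulation using the fact that $\mathfrak{su}(\mathcal{C}^\perp)$ is a real subspace closed under negation.

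Commit to the stabilizer-correction definition you adopt at the end, not the ``some accessible Hamiltonian with the same exponentiated action on $\mathcal{C}$'' reading of your first paragraph. Under that reading the reduction would need the necessity direction of Theorem~\ref{thm:stabilizer-equivalence}, which fails at the level of exponentials: for $k=1$ the logical blocks $\imath\pi Z$ and $\imath\pi X$ both exponentiate to $-I$, so taking $\mathcal{H}$ to be the real span of the encoded $\imath\pi X$ breaks the lemma's ``only if'' direction.
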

    \begin{proof}
        The content of this lemma is exactly that an encoded Hamiltonian is correctibly accessible if and only if it differs from an accessible one by a stabilizer. 
        This follows by construction.
    \end{proof}
    \begin{remark}
        We have
        \begin{equation}
            \mathfrak{su}(\mathcal{C}^\perp) \not\subseteq \mathcal{H} \implies 
            \mathcal{H} \subsetneq \mathcal{H} + \mathfrak{su}(\mathcal{C}^\perp)
        \end{equation}
        hence our stabilizer correction extends the space of accessible Hamiltonians \emph{in our logical sense} in the case where
        \begin{equation}
            CK \left(0_{2^k} \oplus \mathfrak{su}(2^n - 2^k)\right)K^\dagger C^\dagger \not\subseteq \mathcal{H}
        \end{equation}
        Note that our procedure returns the sum of a stabilizer and logical Hamiltonian. These commute, so we can exponentiate them independently and take the product in order to find the exponential of their sum. Independently, they both preserve the codespace by construction, so their product will as well.
    \end{remark}

    \begin{lemma}
        Fix an $[[n,k,d]]$ code $\mathcal{C}$.
        Then the space of correctibly accessible \emph{logical} encoded Hamiltonians is exactly
        \begin{equation}
            \mathfrak{su}(\mathcal{C}) \cap \left(\mathcal{H} + \mathfrak{su}(\mathcal{C}^\perp)\right).
        \end{equation}
    \end{lemma}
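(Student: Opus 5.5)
The claim is a set equality, so I will prove both inclusions, interpreting ``correctibly accessible logical encoded Hamiltonian'' as an element of the logical subalgebra $\mathfrak{su}(\mathcal{C})$ that is correctible to an accessible Hamiltonian in the sense of Lemma \ref{correctibility-criterion}. With that reading, the statement is essentially Lemma \ref{correctibility-criterion} restricted to $\mathfrak{su}(\mathcal{C})$. The only real work is to check that the restriction loses nothing: the stabilizer correction must not destroy the logical action, and every logical encoded Hamiltonian must lie in $\mathfrak{su}(\mathcal{C})$.

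\emph{Forward inclusion.} Take a logical encoded Hamiltonian $H$, i.e.\ $H = CK(H_{\text{logical}} \oplus 0_{2^n-2^k})K^\dagger C^\dagger$ for some $H_{\text{logical}} \in \mathfrak{su}(2^k)$. By Definition \ref{def:logical-subalgebra}, $H \in \mathfrak{su}(\mathcal{C})$. If $H$ is correctibly accessible, Lemma \ref{correctibility-criterion} gives $H \in \mathcal{H} + \mathfrak{su}(\mathcal{C}^\perp)$. Hence $H$ lies in the intersection.

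\emph{Reverse inclusion.} Let $H \in \mathfrak{su}(\mathcal{C}) \cap (\mathcal{H} + \mathfrak{su}(\mathcal{C}^\perp))$. Write $H = H_{\mathcal{H}} - S$ with $H_{\mathcal{H}} \in \mathcal{H}$ and $S = CK(0_{2^k} \oplus H_{\text{stabilizer}})K^\dagger C^\dagger$. Then the accessible Hamiltonian is $H_{\mathcal{H}} = H + S$. Since $H \in \mathfrak{su}(\mathcal{C})$, it is the na\"ive encoding of some $H_{\text{logical}}$, and $H + S$ is block diagonal with respect to $\mathcal{C} \oplus \mathcal{C}^\perp$. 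Theorem \ref{thm:stabilizer-equivalence}, the sufficiency direction, then shows that $\exp(H+S)$ and $\exp(H)$ agree on $\mathcal{C}$. Alternatively, Lemma \ref{commutation} gives $\exp(H+S) = \exp(H)\exp(S)$, and $\exp(S)$ acts as the identity on $\mathcal{C}$. In either case $H$ is correctible to the accessible $H_{\mathcal{H}}$, and Lemma \ref{lemma:k-correct} confirms that it carries the intended logical action.

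The main obstacle is definitional rather than computational. I must pin down that ``logical encoded Hamiltonian'' means exactly membership in $\mathfrak{su}(\mathcal{C})$. In particular, an encoded Hamiltonian with a nonzero $\mathcal{C}^\perp$ block should be regarded as the same logical operator as its $\mathfrak{su}(\mathcal{C})$ component, and Theorem \ref{thm:stabilizer-equivalence} is what licenses this identification. I would state this convention explicitly at the start of the proof so that both inclusions reduce cleanly to Lemma \ref{correctibility-criterion}.
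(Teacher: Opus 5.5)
Your proposal is correct and takes the same route as the paper, whose entire proof is that the statement follows immediately from Lemma \ref{correctibility-criterion} restricted to elements of $\mathfrak{su}(\mathcal{C})$; your two inclusions simply unpack that. The extra check in the reverse direction via Theorem \ref{thm:stabilizer-equivalence} or Lemma \ref{commutation} is harmless but redundant, since Lemma \ref{correctibility-criterion} is already an equivalence.
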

    \begin{proof}
        This follows immediately from Lemma \ref{correctibility-criterion}.
    \end{proof}

    \subsection{Approximation Error}
    \begin{theorem}\label{thm:unitary-error}
        Let $V = \proj_\mathcal{H} H$.
        Then
        \begin{equation}
            \|\exp(H) - \exp(V)\| \leq 
            \|H-V\| \exp(\|H\|)
        \end{equation}
        for any sub-multiplicative norm.
    \end{theorem}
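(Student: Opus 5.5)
The plan is to compare the two exponential series directly and reduce everything to a bound on the difference of matrix powers. For matrices $X, Y$ and $m \geq 1$ we have the telescoping identity
\begin{equation}
    X^m - Y^m = \sum_{j=0}^{m-1} X^{j}(X - Y)Y^{m-1-j}.
\end{equation}
Sub-multiplicativity then gives $\|X^m - Y^m\| \leq m\,\|X-Y\|\,\max(\|X\|,\|Y\|)^{m-1}$. Summing against $1/m!$ (the $m=0$ terms cancel) yields
\begin{equation}
    \|\exp(X) - \exp(Y)\| \leq \|X - Y\| \sum_{m\geq 1}\frac{\max(\|X\|,\|Y\|)^{m-1}}{(m-1)!} = \|X-Y\|\,e^{\max(\|X\|,\|Y\|)}.
\end{equation}
Absolute convergence of the exponential series, noted in Section~\ref{subsec:Lie-Theory}, justifies the termwise manipulation. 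An equivalent route is Duhamel's formula, $\exp(H)-\exp(V) = \int_0^1 \exp(sH)(H-V)\exp((1-s)V)\,ds$, which gives the slightly sharper exponent $s\|H\| + (1-s)\|V\|$ inside the integral.

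Applying this with $X = H$ and $Y = V = \proj_\mathcal{H} H$, the claim reduces to showing $\|V\| \leq \|H\|$, so that the maximum is $\|H\|$. Here I would use that $\proj_\mathcal{H}$ is the orthogonal projection with respect to the Killing form. By Lemma~\ref{Killing-propto-HS} this is the same as orthogonality in the Hilbert--Schmidt inner product. Hence $H = V + (H - V)$ with $V \perp (H-V)$, and Pythagoras gives $\|V\|_{\text{HS}} \leq \|H\|_{\text{HS}}$. Since the Hilbert--Schmidt (Frobenius) norm is itself sub-multiplicative, this proves the theorem for that norm.

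The main obstacle is the phrase ``for any sub-multiplicative norm.'' For a general such norm, an HS-orthogonal projection need not be non-expansive, so $\|V\| \leq \|H\|$ may fail. I would handle this in three steps. First, prove the clean statement for the HS norm as above. Second, note that for the operator norm something stronger holds: because $H, V \in \mathfrak{su}(2^n)$ are skew-Hermitian, $\exp(sH)$ and $\exp((1-s)V)$ are unitary with operator norm $1$. The Duhamel integral then gives $\|\exp(H)-\exp(V)\|_{\text{op}} \leq \|H-V\|_{\text{op}} \leq \|H-V\|_{\text{op}}\exp(\|H\|_{\text{op}})$, with no need to compare $\|V\|$ and $\|H\|$. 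Third, for an arbitrary sub-multiplicative norm, state the bound with $e^{\max(\|H\|,\|V\|)}$, and recover the stated form whenever $\proj_\mathcal{H}$ is non-expansive in that norm. If the authors intend the unconditional statement, this is precisely where I expect an additional hypothesis to be needed.
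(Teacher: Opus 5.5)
Your outline is the paper's own argument. The paper's proof consists only of the estimate $\|\exp(A)-\exp(B)\|\le\|A-B\|\exp(\max(\|A\|,\|B\|))$, followed by the unproved assertion $\max(\|H\|,\|V\|)=\|H\|$. The step you single out as the main obstacle is therefore exactly the step the paper asserts without justification. Your caution is warranted, and you can replace your hedge with definite statements. The examples below live in $\mathfrak{su}(3)$; pad with a zero row and column to work in $\mathfrak{su}(4)$. Both use one-dimensional, hence abelian, subalgebras $\mathcal{H}$.

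\textbf{The paper's step fails for the operator norm.} So your separate unitarity/Duhamel argument is necessary there, not merely an alternative. Take $\mathcal{H}=\mathbb{R}B$ with $B=\imath\operatorname{diag}(1,\tfrac12,-\tfrac32)$ and $H=\imath\operatorname{diag}(1,0,-1)$. Then $V=\tfrac57B$ and $\|V\|_{\mathrm{op}}=\tfrac{15}{14}>1=\|H\|_{\mathrm{op}}$.

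\textbf{The inequality itself fails for some sub-multiplicative norms.} So no additional argument can rescue the unconditional statement. Let $\mathcal{H}=\mathbb{R}\,\imath M$ with
\begin{equation*}
M=\begin{pmatrix}1&1&0\\1&0&1\\0&1&-1\end{pmatrix},\qquad H=\imath\operatorname{diag}(1,0,-1).
\end{equation*}
Then $V=\tfrac{\imath}{3}M$. The identity $M^3=3M$ gives $(e^{V})_{31}=\tfrac13\bigl(\cos(1/\sqrt3)-1\bigr)\approx-0.054$, whereas $(e^{H})_{31}=0$.

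Use the induced, hence sub-multiplicative, norm $\|A\|_S=\|SAS^{-1}\|_{\mathrm{op}}$ with $S=\operatorname{diag}(1,t,t^2)$. This norm multiplies entry $(i,j)$ by $t^{i-j}$. Then:
\begin{itemize}
\item Since $H$ commutes with $S$, $\|H\|_S=1$.
\item Since $H-V$ vanishes at positions $(1,3)$ and $(3,1)$, $\|H-V\|_S\le\|S(H-V)S^{-1}\|_F=\tfrac13\sqrt{8+2t^2+2t^{-2}}$, which grows only like $t$.
\item $\|e^{H}-e^{V}\|_S\ge t^2\,|(e^{V})_{31}|$, which grows like $t^2$.
\end{itemize}
At $t=100$ the left-hand side of the theorem exceeds $540$, while the right-hand side is below $130$.

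So the correct result is essentially the one your proposal establishes. Your Duhamel argument works verbatim for every unitarily invariant norm and gives the sharper bound $\|e^{H}-e^{V}\|\le\|H-V\|$. This covers the Hilbert--Schmidt case as well, so your Pythagoras step is not even needed. For an arbitrary sub-multiplicative norm, only the version with $\exp(\max(\|H\|,\|V\|))$ survives.
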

    \begin{proof}
        Recall that for linear operators $A, B$ we have
        \begin{equation}
            \|\exp(A) - \exp(B)\| \leq \|A-B\| \exp(\max(\|A\|, \|B\|))
        \end{equation}
        for any sub-multiplicative norm.
        Note that
        \begin{equation}
            \max(\|H\|, \|V\|) = \|H\|.
        \end{equation}
    \end{proof}

    \begin{remark}
        The operational implication of Theorem \ref{thm:unitary-error} is that the error incurred by approximating a Hamiltonian $H$ by the nearest accessible Hamiltonian can be bounded in terms of the operator norm of the difference and the norm of $H$.
        In particular, one sees that increasing $\dim \mathcal{H}$ can never increase $\|H - V\|$ and thus can never increase the error incurred by this approximation.
    \end{remark}

    \subsection{Naïve Time Complexity of Our Procedure}\label{subsec:Naïve-Time-Complexity}
    \begin{table}
    \label{tab:complexity}
    \centering
    \caption{Steps for naïvely computing optimal stabilizer correction and their complexity}
    \begin{tabular}{|c|c|}
        \hline
        Step & Complexity \\
        \hline
        \thead{Direct Sum of $2^k \times 2^k$ and \\ $(2^n -2^k)\times (2^n - 2^k)$ Matrices} & \thead{$O(2^{2n})$} \\\hline
        \thead{Computing $K^\dagger$} & \thead{$O(2^{2n})$} \\\hline
        \thead{Conjugation by $K$} & \thead{$O(2^{\omega n})$} \\\hline
        \thead{Computation of Matrix \\ Representation of $M$} & \thead{$O(2^{4n})$} \\\hline
        \thead{Application of $M$} & \thead{$O(2^{4n})$} \\\hline
        \thead{Computation of $(M \circ A)^+$} & \thead{$O((2^n-2^k)^4 \cdot 2^{2n})$} \\\hline
        \thead{Application of $(M \circ A)^+$} & \thead{$O((2^n-2^k)^2 \cdot 2^{2n})$} \\\hline
        \thead{Application of $A$} & \thead{$O(2^n-2^k)^2 \cdot 2^{2n})$} \\
        \hline
    \end{tabular}
    \end{table}
    
    Inspecting Theorem \ref{thm:projection-minimizer} reveals that the operations in Table \ref{tab:complexity} are sufficient to compute an optimal stabilizer correction.
        
    For fixed $k$, the calculation of $(M \circ A)^+$ is the dominating step with time complexity $O(2^{6n})$ where $n$ is the number of physical qubits.
    Since an input unitary or Hamiltonian on $n$ qubits takes space proportional to $2^{2n}-1$ bits to store, we conclude that our method is time-efficient in a complexity-theoretic sense though it scales exponentially in the number of physical qubits for a fixed number of logical qubits.

    \subsection{Improved Time Complexity of Our Procedure}\label{subsec:Improved-Time-Complexity}
    \begin{theorem}\label{thm:improved-complexity}
        There exists an algorithm which finds the optimal stabilizer correction in $O\left(2^{(\omega + 2)n}\right)$ where $\omega$ is the matrix multiplication constant.
    \end{theorem}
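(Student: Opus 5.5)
The plan is to avoid ever forming the $(2^n-2^k)^2 \times 2^{2n}$ matrix of $M \circ A$ and its pseudoinverse. Instead, we reduce the least-squares problem in Theorem \ref{thm:projection-minimizer} to a problem whose normal equations have dimension at most $N := 2^{2n}$. We then solve it with fast matrix multiplication.

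First, fix an orthonormal basis $h_1, \ldots, h_d$ of $\mathcal{H}$ with respect to the Hilbert--Schmidt inner product; by Lemma \ref{Killing-propto-HS} this is the same as Killing orthogonality. Gram--Schmidt on the generators (after closing under brackets) is done on vectorized $2^n \times 2^n$ matrices, i.e.\ vectors of length $N$. Stack these as columns of $Q \in \mathbb{C}^{N \times d}$ with $d \le N$, so that $\proj_\mathcal{H} = QQ^\dagger$ in vectorized form. Then $M(v) = -(I - QQ^\dagger)\,\mathrm{vec}(CvC^\dagger)$, where $\mathrm{vec}(CvC^\dagger) = (\overline{C} \otimes C)\mathrm{vec}(v)$. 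The key observation is that the residual in (\ref{stabilizer-minimization-problem}) is just the distance from $H_{\text{Naive}} + S$ to $\mathcal{H}$, with $S$ ranging over the stabilizer subalgebra $\mathfrak{su}(\mathcal{C}^\perp)$. The problem is therefore the distance between the affine space $H_{\text{Naive}} + \mathfrak{su}(\mathcal{C}^\perp)$ and the subspace $\mathcal{H}$. Equivalently, we minimize $\|h - S - H_{\text{Naive}}\|$ jointly over $(h, S) \in \mathcal{H} \times \mathfrak{su}(\mathcal{C}^\perp)$.

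To handle the stabilizer side cheaply, note that $\mathfrak{su}(\mathcal{C}^\perp)$ is the set of traceless skew-Hermitian matrices supported on $P^\perp := I - P$, where $P = CK(I_{2^k} \oplus 0)K^\dagger C^\dagger$ is the code projector. Computing $P$ costs $O(2^{\omega n})$. The projection onto this space is $X \mapsto P^\perp X P^\perp$ followed by removing the trace part, which costs $O(2^{\omega n})$ per matrix. Rather than parametrizing $S$ explicitly, I would eliminate it. For fixed $h$, the optimal $S$ is the stabilizer-projection of $h - H_{\text{Naive}}$, so the residual becomes $\|R(h - H_{\text{Naive}})\|$. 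Here $R := I - \proj_{\mathfrak{su}(\mathcal{C}^\perp)}$ is a linear map on matrices, computable in $O(2^{\omega n})$ per input. The reduced problem is $\min_{c \in \mathbb{R}^d} \|R Q c - R\,\mathrm{vec}(H_{\text{Naive}})\|$.

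Forming $RQ$ means applying $R$ to $d \le 2^{2n}$ basis matrices, which costs $O(2^{2n} \cdot 2^{\omega n}) = O(2^{(\omega+2)n})$. Solving the resulting $N \times d$ least-squares problem via its normal equations requires the Gram matrix $(RQ)^\dagger RQ$. This is an $N \times N$ times $N \times N$ product, costing $O(N^\omega) = O(2^{2\omega n})$. That exceeds the target when $\omega > 2$, and I expect this to be the main obstacle. What I would try first is to exploit the block structure: $R$ keeps the $P$-blocks $PXP$, $PXP^\perp$, and $P^\perp X P$ plus one trace direction. So the Gram matrix equals $\langle h_i, h_j\rangle - \langle P^\perp h_i P^\perp, P^\perp h_j P^\perp\rangle$, up to a rank-one trace correction. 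Since $Q$ is orthonormal, the first term is the identity. The problem is then equivalent to computing the pseudoinverse of $I - G$, where $G$ is the Gram matrix of the compressed basis $\{P^\perp h_i P^\perp\}$. Alternatively, I would use a pseudoinverse computed in the smaller of the two dimensions, relying on $\mathrm{rank} \le \min(d, \dim \mathfrak{su}(\mathcal{C}^\perp))$ together with the $O(\cdot^\omega)$ cost of rank-revealing factorizations on matrices that are $N$ by at most $N$.

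If neither route gets the exponent down, I would fall back to counting the pseudoinverse as $O(2^{2n} \cdot 2^{2n} \cdot 2^{\omega n \cdot 0})$ via Householder QR on the $N \times d$ matrix, accepting the dominant cost to be whichever of these is smaller. Finally, I would map the optimal $h$ back to $S = \proj_{\mathfrak{su}(\mathcal{C}^\perp)}(h - H_{\text{Naive}})$. Then $H_{\text{stabilizer}} = (K^\dagger C^\dagger S C K)$ restricted to its lower block, which costs $O(2^{\omega n})$. This $S$ agrees with the minimizer of Theorem \ref{thm:projection-minimizer}, since both minimize the same objective and the pseudoinverse selects the minimum-norm solution, which can be imposed here by the same pseudoinverse on $RQ$.
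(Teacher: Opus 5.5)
Your reduction is sound, but the proposal never reaches the claimed bound, so it does not prove the statement. The sound parts: eliminating $S$ to get $\min_c\|RQc-R\,\mathrm{vec}(H_{\text{Naive}})\|$ is correct, and forming $RQ$ fits in $O(2^{(\omega+2)n})$. Mapping back also does return the minimum-norm correction, though for a sharper reason than the one you gave: the min-norm optimal $h$ is orthogonal to $\mathcal{H}\cap\mathfrak{su}(\mathcal{C}^\perp)$, and hence so is its stabilizer projection.

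Every route you offer for the actual solve is over budget:
\begin{enumerate}[(a)]
\item The normal equations cost $O(N^\omega)=O(2^{2\omega n})$ once $d=\Theta(2^{2n})$. So does the pseudoinverse of the dense $d\times d$ matrix $I-G$ that your block-structure observation reduces to. This exceeds the target by $2^{(\omega-2)n}$ for every $\omega>2$, and $d=\Theta(2^{2n})$ does occur for Lie subalgebras such as $\mathfrak{so}(2^n)$.
\item Working in the smaller dimension does not help, since $d$ and $\dim\mathfrak{su}(\mathcal{C}^\perp)$ are both $\Theta(2^{2n})$ in the worst case.
\item The Householder fallback is miscounted: QR of an $N\times d$ matrix costs $\Theta(Nd^2)$, i.e.\ $O(2^{6n})$, not $O(2^{4n})$.
\end{enumerate}
The same $\Theta(Nd^2)$ cost already sits in your initial Gram--Schmidt on $\mathcal{H}$, unless the given basis is orthonormal (e.g.\ Pauli strings). ``Whichever is smaller'' is therefore $O(2^{2\omega n})$.

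The paper avoids the solve altogether. It asserts that in an orthonormal basis adapted to $\ker(M\circ A)=A^{-1}(C^\dagger\mathcal{H}C)$ one has $M\circ A=[0;\,-G_C]$, with $G_C\colon v\mapsto CvC^\dagger$ an isometry. Then $(M\circ A)^+=(M\circ A)^\dagger=[0,\,-G_{C^\dagger}]$, and the total cost is just $\dim\mathcal{H}+\dim\mathcal{H}^\perp=2^{2n}-1$ conjugations at $O(2^{\omega n})$ each. That is the ingredient your proposal lacks. In your notation it says $G$ is an orthogonal projection, so $I-G$ is its own pseudoinverse and your route would also close within budget.

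You should not simply import it, though. The block form holds if and only if $\mathfrak{su}(\mathcal{C}^\perp)=(\mathfrak{su}(\mathcal{C}^\perp)\cap\mathcal{H})\oplus(\mathfrak{su}(\mathcal{C}^\perp)\cap\mathcal{H}^\perp)$, i.e.\ if and only if $\proj_{\mathcal{H}}$ commutes with the projection onto $\mathfrak{su}(\mathcal{C}^\perp)$. The paper does not justify this, and it fails in its own toy example: projecting $\imath X_2X_4\in\mathcal{H}$ onto $\mathfrak{su}(\mathcal{C}^\perp)$ produces an $\imath X_1X_3$ component outside $\mathcal{H}$.

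Worse, whenever the condition does hold, $H_{\text{Naive}}\perp\mathfrak{su}(\mathcal{C}^\perp)$ makes the pseudoinverse solution $-\proj_{\mathfrak{su}(\mathcal{C}^\perp)\cap\mathcal{H}^\perp}(H_{\text{Naive}})=0$. The correction then never improves on the naive encoding. So your suspicion that a generic least-squares solve cannot meet this bound is well founded. Closing the gap needs a structural fact about how $\mathcal{H}$ sits relative to $\mathfrak{su}(\mathcal{C}^\perp)$, and neither your proposal nor the paper's argument supplies one.
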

    \begin{proof}
        We provide a constructive proof.
        \emph{A priori}, we know that the nullspace of $M$ is $C^\dagger \mathcal{H} C$ hence the nullspace of $M \circ A$ is exactly $A^{-1} (C^\dagger \mathcal{H} C)$.
        Moreover, we have that the elements of $(C^\dagger \mathcal{H} C)^\perp$ transform like $v \mapsto -C v C^\dagger$ under the action of $M$.
        
        Our strategy, then, is twofold:
        First, we construct an orthonormal basis for $A^{-1} (C^\dagger \mathcal{H} C)$. 
        Then, we extend this to an orthonormal basis for $\text{dom } A = \mathbb{C}^{(2^n - 2^k) \times (2^n - 2^k)}$.
        
        We begin by finding a basis $\mathcal{B}_2$ for $C^\dagger \mathcal{H} C$.
        This is simply the set of $\dim \mathcal{H}$ basis vectors for $\mathcal{H}$, which we are given, acted upon via conjugation by $C^\dagger$ for an overall complexity of $O(\dim \mathcal{H} \: 2^{\omega n})$.
        We can then find $\mathcal{B}_1$, a basis for $A^{-1} (C^\dagger \mathcal{H} C)$, as follows:
        For each $v \in \mathcal{B}_2$, if $K v K^\dagger$ has an upper diagonal $0_{2^n - 2^k}$ block then $v$ is in the image of $A$.
        Moreover, its pre-image is the lower diagonal block of $K v K^\dagger$ and can thus be added to $\mathcal{B}_1$.
        If the upper diagonal block of $K v K^\dagger$ is non-zero then $v$ is not in the image of $A$.
        It suffices to apply this procedure to each of the $\dim \mathcal{H}$ basis vectors, each at a cost of $O(2^{2n} + 2 \times 2^{\omega n}) = O(2^{\omega n})$ for a total cost of $O(\dim \mathcal{H} \: 2^{\omega n})$.
        
        In the aforementioned basis, we have that
        \begin{equation}
            (M \circ A) = 
            \begin{bmatrix}
            0 \\
            -G_C
            \end{bmatrix}
        \end{equation}
        where $G_C$ takes $v$ to $C v C^\dagger$.
        Then, since $G_C$ is invertible, $(M \circ A)$ has linearly independent columns, hence we have
        \begin{multline}
            (M \circ A)^+ = ((M \circ A)^\dagger (M \circ A))^{-1} (M \circ A)^\dagger = \\
            \left(\begin{bmatrix}
            0 & -G_{C^\dagger}
            \end{bmatrix}
            \begin{bmatrix}
            0 \\
            -G_C
            \end{bmatrix}
            \right)^{-1}
            \begin{bmatrix}
            0 & -G_{C^\dagger}
            \end{bmatrix} = \\
            I \begin{bmatrix}
            0 & -G_{C^\dagger}
            \end{bmatrix} = 
            \begin{bmatrix}
            0 & -G_{C^\dagger}
            \end{bmatrix}.
        \end{multline}
        
        We can find $G_{C^\dagger}$ by:
        \begin{enumerate}[(a)]
            \item Finding $C^\dagger$ in $O(2^{2n})$.
            \item Applying $v \mapsto - C^\dagger v C$ a total of $\dim \mathcal{H}^\perp$ times, each with a cost of $O(2 \times 2^{\omega n}) = O(2^{\omega n})$ where $\omega$ is the matrix multiplication constant, for a total complexity of $O(\dim(\mathcal{H}^\perp) \: 2^{\omega n})$.
        \end{enumerate}
        Since $\omega \geq 2$ the overall complexity of finding the block pseudoinverse is $O(\dim(\mathcal{H}^\perp) \: 2^{\omega n})$.
                
        Note that 
        \begin{equation}
            \dim(\mathcal{H}) + \dim(\mathcal{H}^\perp) = 2^{2n} - 1
        \end{equation}
        hence our new method is
        \begin{equation}
            O(\dim(\mathcal{H}) \: 2^{\omega n} + \dim(\mathcal{H}^\perp) \: 2^{\omega n}) = O\left(2^{(\omega + 2)n}\right).
        \end{equation}
    \end{proof}

    \begin{remark}
        We have yet to make use of parallel strategies for computing the Moore-Penrose pseudoinverse.
        Given the procedure of Theorem \ref{thm:improved-complexity}, a natural next step is to parallelize the application of the map $v \mapsto C v C^\dagger$ for various $v$.
        We may likewise parallelize the procedure for finding $\mathcal{B}_1$.
        Doing so gives a parallel procedure with $O(2^{\omega n})$ time complexity and $O(\max(\dim(\mathcal{H}), \dim(\mathcal{H}^\perp)))$ width.
    \end{remark}

    \subsection{Regularization}
    \label{subsec:Regularization}
    
    Suppose that in addition to valuing the accessibility of a solution to our optimization problem, as we have already done, we also, for the benefit of the experimentalist, valued sparsity in a particular basis $\mathcal{B}$ for $\mathfrak{su}\left(2^n\right)$.
    In that basis, for $\lambda \in [0, \infty)$ we have a minimization problem:
    \begin{multline}\label{l1-l2 minimization}
        \argmin_{H_{\text{stabilizer}} \in \mathfrak{su}(2^n - 2^k)} \|M \circ A(H_{\text{stabilizer}}) + \\
        M\left(K (H_{\text{logical}} \oplus 0_{2^n - 2^k})K^\dagger \right)\|_2^2 + \\
        \lambda \|C A(H_{\text{stabilizer}}) C^\dagger + \\
        C\left(K (H_{\text{logical}} \oplus 0_{2^n - 2^k})K^\dagger \right)C^\dagger\|_1.
    \end{multline}
    Equation $(\ref{l1-l2 minimization})$ has the form
    \begin{equation}
        \min_x f(x) + g(x)
    \end{equation}
    subject to
    \begin{equation}
        Ix - Ix = 0
    \end{equation}
    with $f,g$ convex and is thus amenable to the alternating direction method of multipliers (ADMM) \cite{boyd_2010}.

    \subsection{Weighted Hamiltonians}
    \label{subsec:Weighted-Hamiltonians}
    Thus far, we have declared Hamiltonians to either be completely accessible or completely inaccessible.
    This can be understood as a binary notion of cost.
    In this section, we study a generalized notion of cost wherein all Hamiltonians have associated weights: higher weights denote higher costs.
    This is motivated by e.g. the setting of a quantum computer comprised of a reconfigurable atom array: while the operation of switching atoms in the array is technically accessible, it is, in some sense, more costly than interactions between atoms which are already adjacent.
    While we have not addressed this subtlety thus far, we claim that our framework can be modified in a straightforward way in order to capture it.
    Suppose Alice can implement any physical Hamiltonian, though with varying costs.
    We define a function $J$ capturing the extent to which a Hamiltonian solution uses ``expensive'' accessible Hamiltonians $H_j$ - furnishing a basis for $\mathfrak{su}(2^n)$ - with weights $w_j$ for $j$ in index set $\mathcal{J}$:
    \begin{definition}\label{def:Hamiltonian-cost}
        \begin{equation}
        \begin{aligned}
            J \colon \mathbb{C}^{2^n \times 2^n} & \longrightarrow \mathbb{C}^{2^n \times 2^n} \\
            \gamma & \longmapsto \sum_{j \in \mathcal{J}} w_j \proj_{H_j} \gamma
        \end{aligned}
        \end{equation}
    \end{definition}
    Our cost function becomes:
    \begin{multline}
        \|J(C A(H_{\text{stabilizer}}) C^\dagger + \\
        C\left(K (H_{\text{logical}} \oplus 0_{2^n - 2^k})K^\dagger \right)C^\dagger)\|^2_2.
    \end{multline}
    By the argument in Section \ref{subsec:Optimal-Choice-Of-Stabilizer}, this too can be minimized by application of a pseudoinverse.

\section{CNOT on Error-Detecting Code}\label{sec:Example}
    We investigate implementations of CNOT on the $[[4,2,2]]$ error-detecting code using a toy and a non-trivial example of an accessible subalgebra.
    Recall that $K$ as in Definition \ref{Def:K} is not uniquely defined.
    For concreteness, we define $K$ uniquely as follows:
    \begin{equation}
        K\left(\ket{\psi} \oplus \vec{0}_{2^n - 2^k}\right) = \ket{\psi} \otimes \ket{\vec{0}}_{2^{n-k}}
    \end{equation}
    and
    \begin{equation}
        K\left(\ket{2^k + m}\right) = \ket{i_m} \otimes \ket{j_m}
    \end{equation}
    for $m = 0, ..., 2^n - 2^k - 1$ where
    \begin{equation}
        i_m = \left\lfloor \frac{m}{2^{n-k}-1} \right\rfloor
    \end{equation}
    and
    \begin{equation}
        j_m = \left(m \mod \left(2^{n-k}-1\right)\right) + 1.
    \end{equation}
    \subsection{Toy Example}
    Our goal in this subsection is to engineer the accessible subalgebra such that the well-known SWAP Hamiltonian is recovered as an implementation of CNOT on the $[[4,2,2]]$ code.

    \begin{figure}[t]
    \centering
        \scalebox{0.9}{%
        \begin{tikzpicture}
            \node[shape=circle,draw=black] (Qubit 1) at (0,0) {Qubit 1};
            \node[shape=circle,draw=black] (Qubit 2) at (0,3) {Qubit 2};
            \node[shape=circle,draw=black] (Qubit 3) at (3,3) {Qubit 3};
            \node[shape=circle,draw=black] (Qubit 4) at (3,0) {Qubit 4};
        
            \path [-] (Qubit 2) edge node[left] {} (Qubit 4);
        \end{tikzpicture}
        }
        \caption{Toy Connectivity Graph}\label{fig:toy-connectivity}
        \vspace*{-10pt}
    \end{figure}

    Consider the connectivity graph $G$ shown in Figure \ref{fig:toy-connectivity}.
    This induces an accessible Lie algebra $\mathcal{H}$ as follows, where $a \in \{0,1,2,3\}$, $P^a$ is the $a$-th Pauli operator, $P^a_i$ is the $a$-th Pauli operator acting on qubit $i$, $\langle \_ \rangle_{\text{Lie}}$ denotes closure under the Lie bracket and $\imath$ is the imaginary unit:
    \begin{equation}
        \mathcal{H} := \langle \{ \imath P^a_i P^a_j : 
        (i,j) \in G, P^a \in \mathcal{P} \setminus \{i I_2\}\} \rangle_{\text{Lie}}.
    \end{equation}
    We want to compile the Hamiltonian
    \begin{multline}\label{eq:traceless-cnot}
        H_{\text{logical}} = 
        \log\left(CNOT/\left(\det(CNOT)^{1/2^2}\right)\right) - \\
        \Tr\left(\log\left(CNOT/\left(\det(CNOT)^{1/2^2}\right)\right)\right) I,
    \end{multline}
    where in this case $\log$ is the principal logarithm on the $[[4,2,2]]$ code.
    Note that the second summand in (\ref{eq:traceless-cnot}) exists to make the whole operator traceless while only modifying the global phase of its exponentiated form.
    The trick for an arbitrary Hamiltonian $H$ on a $d$-dimensional system is to take
    \begin{equation}
        H_{\text{traceless}} := H - \frac{\Tr(H)}{d} I.
    \end{equation}
    Given that $I$ commutes with $H$, one has
    \begin{multline}
        \exp(H_{\text{traceless}}) = \exp(H - \frac{\Tr(H)}{d} I) = \\
        \exp(H) \exp(- \frac{\Tr(H)}{d} I) \propto \exp(H),
    \end{multline}
    since
    \begin{equation}
        \exp(- \frac{\Tr(H)}{d} I) = \exp(\imath \theta) I
    \end{equation} for some $\theta \in [0, 2 \pi)$.
    At the same time, $H_{\text{traceless}}$ is manifestly traceless.
    
    The $[[4,2,2]]$ code has a Clifford encoding map represented in the binary symplectic space by
    \begin{equation}
        F_C = 
        \begin{bsmallmatrix}
            1&1&0&0&0&0&0&0\\
            0&1&0&1&0&0&0&0\\
            0&0&0&0&1&1&0&1\\
            1&0&1&1&1&1&1&1\\
            0&0&0&0&1&0&1&0\\
            0&0&0&0&0&0&1&1\\
            1&1&1&1&0&0&0&0\\
            0&0&0&0&1&1&1&1
        \end{bsmallmatrix}.
    \end{equation}
    The $2n \times 2n$ matrix $F_C$ is interpreted as follows: The single-qubit Pauli $I, X, Y, Z$ operators are identified with the standard basis vectors $(a_1, \ldots, a_n | b_1, \ldots, b_n)$ for $\mathbb{F}_2^{2n}$ via the map
    \begin{equation}
        (a_1, \ldots, a_n | b_1, \ldots, b_n) \mapsto \bigotimes_{i=1}^n \imath^{a_i b_i} X_i^{a_i} Z_i^{b_i}.
    \end{equation}
    Then, the rows of $F_C$ describe the action via conjugation of $C$ on the $2n$ standard $X$ and $Z$ Pauli operators on $n$ qubits.
    The matrix $F_C$ induces the following unitary representation in the computational basis (fixing the determinant to be $1$):
    \setcounter{MaxMatrixCols}{16}
    \begin{multline}
        C = \\
        \scalebox{0.7}{
        $\begin{bsmallmatrix}
             \frac{1}{\sqrt{2}} & 0 & \frac{1}{\sqrt{2}} & 0 & 0 & 0 & 0 & 0 & 0 & 0 & 0 & 0 & 0 & 0 & 0 & 0 \\
             0 & 0 & 0 & 0 & 0 & -\frac{\imath}{\sqrt{2}} & 0 & \frac{\imath}{\sqrt{2}} & 0 & 0 & 0 & 0 & 0 & 0 & 0 & 0 \\
             0 & 0 & 0 & 0 & 0 & 0 & 0 & 0 & 0 & 0 & 0 & 0 & 0 & \frac{\imath}{\sqrt{2}} & 0 & \frac{\imath}{\sqrt{2}} \\
             0 & 0 & 0 & 0 & 0 & 0 & 0 & 0 & -\frac{1}{\sqrt{2}} & 0 & \frac{1}{\sqrt{2}} & 0 & 0 & 0 & 0 & 0 \\
             0 & -\frac{\imath}{\sqrt{2}} & 0 & \frac{\imath}{\sqrt{2}} & 0 & 0 & 0 & 0 & 0 & 0 & 0 & 0 & 0 & 0 & 0 & 0 \\
             0 & 0 & 0 & 0 & \frac{1}{\sqrt{2}} & 0 & \frac{1}{\sqrt{2}} & 0 & 0 & 0 & 0 & 0 & 0 & 0 & 0 & 0 \\
             0 & 0 & 0 & 0 & 0 & 0 & 0 & 0 & 0 & 0 & 0 & 0 & -\frac{1}{\sqrt{2}} & 0 & \frac{1}{\sqrt{2}} & 0 \\
             0 & 0 & 0 & 0 & 0 & 0 & 0 & 0 & 0 & \frac{\imath}{\sqrt{2}} & 0 & \frac{\imath}{\sqrt{2}} & 0 & 0 & 0 & 0 \\
             0 & 0 & 0 & 0 & 0 & 0 & 0 & 0 & 0 & -\frac{\imath}{\sqrt{2}} & 0 & \frac{\imath}{\sqrt{2}} & 0 & 0 & 0 & 0 \\
             0 & 0 & 0 & 0 & 0 & 0 & 0 & 0 & 0 & 0 & 0 & 0 & \frac{1}{\sqrt{2}} & 0 & \frac{1}{\sqrt{2}} & 0 \\
             0 & 0 & 0 & 0 & -\frac{1}{\sqrt{2}} & 0 & \frac{1}{\sqrt{2}} & 0 & 0 & 0 & 0 & 0 & 0 & 0 & 0 & 0 \\
             0 & \frac{\imath}{\sqrt{2}} & 0 & \frac{\imath}{\sqrt{2}} & 0 & 0 & 0 & 0 & 0 & 0 & 0 & 0 & 0 & 0 & 0 & 0 \\
             0 & 0 & 0 & 0 & 0 & 0 & 0 & 0 & \frac{1}{\sqrt{2}} & 0 & \frac{1}{\sqrt{2}} & 0 & 0 & 0 & 0 & 0 \\
             0 & 0 & 0 & 0 & 0 & 0 & 0 & 0 & 0 & 0 & 0 & 0 & 0 & -\frac{\imath}{\sqrt{2}} & 0 & \frac{\imath}{\sqrt{2}} \\
             0 & 0 & 0 & 0 & 0 & \frac{\imath}{\sqrt{2}} & 0 & \frac{\imath}{\sqrt{2}} & 0 & 0 & 0 & 0 & 0 & 0 & 0 & 0 \\
             -\frac{1}{\sqrt{2}} & 0 & \frac{1}{\sqrt{2}} & 0 & 0 & 0 & 0 & 0 & 0 & 0 & 0 & 0 & 0 & 0 & 0 & 0 \\
        \end{bsmallmatrix}$.
        }
    \end{multline}
    The na{\"i}ve Hamiltonian implementation of $CNOT$ is then
    \begin{equation}
        H_{CNOT_{1 \to 2}} = C K \left(H_{\text{logical}} \oplus 0_{12} \right)K^\dagger C^\dagger.
    \end{equation}
    We find that the naïve implementation of CNOT on the $[[4,2,2]]$ code is not accessible.
    However, by applying Theorem~\ref{thm:projection-minimizer} we find that the SWAP Hamiltonian
    \begin{equation}
        H_{\text{SWAP} 2 \leftrightarrow 4} \propto X_2 X_4 + Y_2 Y_4 + Z_2 Z_4
    \end{equation}
    is indeed an accessible implementation of CNOT.
    
    \subsection{Non-Trivial Example}
    \begin{figure}[t]
    \centering
        \scalebox{0.9}{%
        \begin{tikzpicture}
            \node[shape=circle,draw=black] (Qubit 1) at (0,0) {Qubit 1};
            \node[shape=circle,draw=black] (Qubit 2) at (0,3) {Qubit 2};
            \node[shape=circle,draw=black] (Qubit 3) at (3,3) {Qubit 3};
            \node[shape=circle,draw=black] (Qubit 4) at (3,0) {Qubit 4};
        
            \path [-] (Qubit 1) edge node[left] {} (Qubit 3);
            \path [-] (Qubit 2) edge node[left] {} (Qubit 4);
            \path [-] (Qubit 1) edge node[left] {} (Qubit 2);
            \path [-] (Qubit 3) edge node[left] {} (Qubit 4);
        \end{tikzpicture}
        }
        \caption{Non-Trivial Connectivity Graph}\label{fig:connectivity}
        \vspace*{-10pt}
    \end{figure}
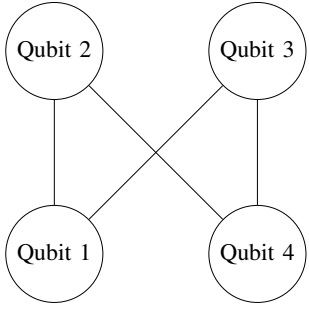
    
    Consider the connectivity graph $G$ shown in Figure \ref{fig:connectivity}.
    This induces an accessible Lie algebra $\mathcal{H}$ as before.
    In fact, the naïve implementation of CNOT is an element of $\mathcal{H}$.
    Instead, then, we turn to random implementations.
    We find that the implementation of CNOT engendered by a randomly generated stabilizer is usually not accessible.
    Yet, we find a 12-Pauli-term implementation of CNOT distinct from the SWAP Hamiltonian which is accessible.
    This shows that the pseudo-inverse solution to the least squares problem does not always recover the sparsest solution in the Pauli basis.
    Hence, this further motivates the regularized formulation discussed earlier.
    We plan to investigate it deeply in future work.

\section{Conclusion}
\label{sec:Conclusion}

    \subsection{Practical Considerations}
    \label{subsec:Practical-Considerations}
    
        We emphasize that from the experimentalist's point of view, it is more natural to work with Hamiltonians than unitaries as these are the basic operations available to them.
        In this sense, our decision to work in the Hamiltonian setting, rather than the unitary one, is practically, as well as theoretically, motivated.
        
        In an immediate sense, our primary least-squares method confers a benefit over na{\"i}ve compilation only when the accessible Lie algebra is a proper subset of the full special unitary Lie algebra.
        Thus, if the connectivity graph in question is connected with a vertex of degree exceeding 2, then our method will confer no obvious benefit~\cite[Theorem~I.1]{kökcü2024}.
        However, we argue that even if in this case, there may be benefits to our method achieved by excising edges of the connectivity graph.
        In some architectures with all-to-all connectivity, for example ion traps, long-distance connections may be realized by physically moving qubits, which can be relatively expensive in terms of runtime and coherence.
        Thus, it may be advantageous to first check if more connectivity-constrained solutions exist prior to making use of such long-range connections.
        More generally, it might be advantageous to use connectivity graphs to enforce different notions of \emph{easy} Hamiltonians to be implemented. 
        In the extreme case, for instance, we could enforce transversality by demanding that the connectivity graph be devoid of edges.
        Of course, our method could also be of use in this case given a satisfactory assignment of costs to elements of a basis for $\mathfrak{su}(2^n)$.

        In regard to runtime, our numerics suggest that our method scales exponentially in the number of physical qubits for a fixed number of logical qubits, though we note that we can solve a problem instance on the $[[5.1,3]]$ code in under $50$ milliseconds.
        However, we have yet to make use of \emph{parallel} methods for computing the Moore-Penrose Pseudoinverse which is the predominant step of our algorithm in terms of runtime.
        Such methods could drastically reduce the runtime necessary to perform our analysis computationally.
        
    \subsection{Future Work}
    \label{subsec:Future-Work}
    
    The absence of parallel methods for computing the Moore-Pensore Pseudoinverse here leaves open the possibility of dramatic time savings using parallel methods such as that implemented in CUDA. It would be productive to explore such methods in detail.

    In terms of applications, we view distributed qLDPC architectures as a natural candidate for our method. 
    Therein, connectivity is naturally limited and long-range connections are exorbitantly expensive.
    Thus, our method becomes a natural candidate for compilation.

    Also, K{\"o}kc{\"u} \emph{et al.}~\cite{kökcü2024} have already studied the Lie-algebraic properties of $\mathcal{H}$ in terms of connectivity graphs that generate them, whereas Aguilar \emph{et al.}~\cite{aguilar2024classificationpauliliealgebras} have reduced a classification problem for arbitrary Lie algebras generated by Pauli operators to a graph-theoretic one.
    It may be both theoretically and practically fruitful to continue this line of work by studying the properties of $\mathcal{H}$ in terms of graphs which generate it, in a suitable sense, with an eye towards the applicability of our method.
    Namely, which properties of the graph generating $\mathcal{H}$ guarantee that our method finds an accessible Hamiltonian for a given target and which properties can be exploited to speed up the computation of the Moore-Penrose Pseudoinverse, especially in a parallel fashion?
    It would be advantageous, for instance, if we were guaranteed that $(M \circ A)$ had some block structure beyond that discussed here.
    At the same time, investigating more sophisticated methods for regularization might be advantageous from a runtime perspective as well.

    Last, note that we began by fixing a labeling of physical qubits in our connectivity graph.
    This is a mathematical convenience rather than an operational necessity: one could apply any permutation of $n$ qubits to the existing connectivity graph and obtain an operationally valid connectivity graph.
    Yet, some permutations might be preferable to others insofar as they enable lower incompatibility via our procedure.
    For instance, if we exchange vertices $1$ and $2$ in Figure \ref{fig:connectivity}, our least-squares procedure no longer recovers an accessible implementation of CNOT.
    Thus, we have a co-design problem: how do we choose a permutation of qubits for which our procedure finds a solution of minimal incompatibility in the context of \emph{all possible permutations}?
    Resolving this question might augment the power of our current procedure.
    At first glance, we note that the Lie group of stabilizers has a rich analytic structure which we extensively make use of whereas the group of permutations on $n$ qubits is a strictly algebraic object.
    Thus, this co-design problem is one between objects which do not appear to be mathematically similar.

    \subsection{Summary}\label{subsec:Summary}
    In terms of encoded quantum computation, it suffices to implement any operator logically equivalent to some target, yet from a compilation perspective, certain choices may be preferable to others.
    Our novel contribution is making this intuition precise in the general setting of the special unitary group.
    In particular, we described how to reduce the problem of making a compilation-ideal choice to a least squares problem and provided a closed form and efficiently computable solution thereof. 
    We provided a time complexity bound on our method, and a worked example in which we discover a novel route for implementing logical $CNOT$ on the $[[4,2,2]]$ error-detecting code.

\section*{Acknowledgments}

We thank Iman Marvian for a helpful discussion related to traceless Hamiltonian generators.
We thank Jonathan Baker for helpful discussions related to weighted Hamiltonian costs and vertex permutations.
We used generative AI - namely Claude Opus 4.6 - to generate implementations of Algorithm \ref{alg:summary}, both as written and using the improved procedure of Section \ref{subsec:Improved-Time-Complexity}, in Python. The numerics generated by these implementations are found in Section \ref{sec:Example}.
This work was supported by the U.S. National Science Foundation Grant no. 2106189.

\bibliographystyle{IEEEtran}
\bibliography{refs.bib}

\end{document}